\newcommand{\qw}[1][-1]{\ar @{-} [0,#1]}
\newcommand{\qwx}[1][-1]{\ar @{-} [#1,0]}
\newcommand{\gate}[1]{*+<.6em>{#1} \POS ="i","i"+UR;"i"+UL **\dir{-};"i"+DL **\dir{-};"i"+DR **\dir{-};"i"+UR **\dir{-},"i" \qw}
\newcommand{\meter}{*=<1.8em,1.4em>{\xy ="j","j"-<.778em,.322em>;{"j"+<.778em,-.322em> \ellipse ur,_{}},"j"-<0em,.4em>;p+<.5em,.9em> **\dir{-},"j"+<2.2em,2.2em>*{},"j"-<2.2em,2.2em>*{} \endxy} \POS ="i","i"+UR;"i"+UL **\dir{-};"i"+DL **\dir{-};"i"+DR **\dir{-};"i"+UR **\dir{-},"i" \qw}
\newcommand{\control}{*!<0em,.025em>-=-<.2em>{\bullet}}
\newcommand{\ctrl}[1]{\control \qwx[#1] \qw}
\newcommand{\multigate}[2]{*+<1em,.9em>{\hphantom{#2}} \POS [0,0]="i",[0,0].[#1,0]="e",!C *{#2},"e"+UR;"e"+UL **\dir{-};"e"+DL **\dir{-};"e"+DR **\dir{-};"e"+UR **\dir{-},"i" \qw}
\newcommand{\ghost}[1]{*+<1em,.9em>{\hphantom{#1}} \qw}
\newcommand{\Qcircuit}{\xymatrix @*=<0em>}
\title{Forrelation is Extremally Hard}
\author{Uma Girish\thanks{  Columbia University. Email:  \href{mailto:ug2150@columbia.edu}{ug2150@columbia.edu}} \and Rocco A.~Servedio\thanks{  Columbia University. Email:  \href{mailto:ras2105@columbia.edu}{ras2105@columbia.edu}}}
\date{}
\begin{document}

\maketitle 

\begin{abstract}

The Forrelation problem is a central problem that demonstrates an exponential separation between quantum and classical capabilities. In this problem, given query access to $n$-bit Boolean functions $f$ and $g$, the goal is to estimate the Forrelation function $\forr(f,g)$, which measures the correlation between $g$ and the Fourier transform of $f$. 

In this work we provide a new linear algebraic perspective on the Forrelation problem, as opposed to prior analytic approaches. 
We establish a connection between the Forrelation problem and \emph{bent Boolean functions} and through this connection,  analyze an \emph{extremal} version of the Forrelation problem where the goal is to distinguish between extremal instances of Forrelation, namely $(f,g)$ with $\forr(f,g)=1$ and $\forr(f,g)=-1$. 

We show that this problem can be solved with \emph{one} quantum query and success probability \emph{one}, yet requires $\tilde{\Omega}\pbra{2^{n/4}}$ classical randomized queries, even for algorithms with a one-third failure probability, highlighting the remarkable power of one exact quantum query. We also study a restricted variant of this problem where the inputs $f,g$ are computable by small classical circuits and show classical hardness under cryptographic assumptions.
\end{abstract}


\section{Introduction}

Understanding the relative power of quantum versus classical computation is one of the major goals in complexity theory. Following the seminal work of Shor~\cite{Sho94}, it is widely believed that quantum computation is exponentially more powerful than classical computation; however, since we are unable to prove classical lower bounds for strong models of computation, there are relatively few settings in which such a separation can be unconditionally established. Query complexity is an important example of such a setting where we can unconditionally prove exponential quantum speedups. 

In query complexity, we typically consider a Boolean function $f:\{0,1\}^n \to \{\pm 1\}$ and the objective is to compute some property of $f$ by querying the values $f(x)$ on as few inputs $x\in \{0,1\}^n$ as possible. In the classical setting, the algorithm can adaptively and probabilistically choose inputs to query, and the goal is to solve the problem with high success probability, say at least $2/3$. In the quantum setting, the standard way to model a quantum query is by means of the unitary operator $O_f$ which maps $\ket{x}$ to $\ket{x}f(x)$ for all $x\in\{0,1\}^n$ and as before, the goal is to compute some property of $f$ with high success probability, while minimizing the number of calls to the unitary $O_f$. Numerous works~\cite{DJ92,Sim97,BV97,Aar10} have demonstrated properties of $f$ that are exponentially easier to compute with quantum queries as opposed to classical queries. For instance, a version of periodicity testing~\cite{cleve04} can be solved with $\mathrm{poly}(n)$ quantum queries, while requiring $2^{\Omega(n)}$ classical randomized queries; this algorithm is a key subroutine in Shor's factoring algorithm~\cite{Sho94}, and the classical query lower bound helps explain some of the difficulty in finding efficient classical algorithms for factoring. 

Understanding the strongest possible separation between quantum and classical computation has long been a topic of great interest. The overarching motivation here is to find a problem that is as easy as possible for quantum algorithms and as hard as possible for classical algorithms. 
In this context, a new problem called the \emph{Forrelation problem} has emerged as a central concept.

In the Forrelation problem, given query access to Boolean functions $f,g$, the goal is to estimate the value of the Forrelation function $\forr(f,g) \in [-1,1]$, which captures the correlation between $g$ and $\widehat{f}$, the Fourier transform of $f$. The Forrelation function has been used to establish numerous results about the power of quantum computation. The first such result is due to Aaronson~\cite{Aar10}, who showed that the Forrelation problem can be solved with high probability using just \emph{one} quantum query, yet randomized algorithms require $2^{\Omega(n)}$ queries. This was quantitatively strengthened by~\cite{AA15} who defined variants of the Forrelation problem that are now known to demonstrate the largest possible separation between quantum and randomized query complexity for partial functions~\cite{AA15,Tal20,sherstov2023optimal,BS21,BGGL22}. The Forrelation problem has since been used to prove many other results, including the celebrated oracle separation of $\mathsf{BQP}$ and $\mathsf{PH}$~\cite{RT22}. Variants of this problem have been used to prove quantum lower bounds, such as a construction of a classical oracle relative to which $\mathsf{P}=\mathsf{NP}$ but $\mathsf{BQP} \neq \mathsf{QCMA}$ ~\cite{AIK22}, as well as the existence of various quantum cryptography primitives~\cite{KQST23,KQT25},  separations between adaptive and non-adaptive quantum algorithms~\cite{GSTW24}, and various separations between quantum and classical communication complexity~\cite{GRT22,GRZ21,AG23}.

However, all these works share one common limitation -- they only establish classical hardness for estimating the Forrelation function \emph{up to a constant less than one}. The best-known result in this context is due to Aaronson and Ambainis~\cite{AA15}, who show that distinguishing between $\forr(f,g)\ge 2/\pi$ and $\forr(f,g)\le -2/\pi$ requires $2^{\Omega(n)}$ classical randomized queries. The factor of $2/\pi$ arises from their analytic approach, which involves sampling Gaussian random variables and rounding them to $\{\pm 1\}$. All existing techniques use this framework and 
run into the same $2/\pi$ barrier. This naturally leads us to ask: just how hard is it to approximate the Forrelation function in the extremal case? In particular, Aaronson and Ambainis~\cite{AA15} ask the following question (see open question \#4 in the discussion section of their paper): if we want a $1$ versus $2^{\Omega(n)}$ separation between quantum and classical query complexity, how small can the error of the quantum algorithm be? More precisely, we ask:
\begin{quote}\centering
    \emph{How hard is it to distinguish $\forr(f,g)=1$ from $\forr(f,g)=-1$?}
\end{quote}
These two extreme cases capture the largest and smallest possible values of the Forrelation function, and hence this question captures the hardness of approximating Forrelation to any non-trivial factor. 

To study this problem, we introduce a fundamentally new way of looking at the Forrelation problem.  In contrast to previous analytic approaches, which rely on rounding high-dimensional Gaussian distributions, 
our approach uses only simple linear algebra over $\F_2^n$ and elementary probabilistic arguments. We establish a novel connection between the Forrelation problem and \emph{bent} functions, a well-studied concept in the analysis of Boolean functions. Using this connection, we show that despite the strong promise on the inputs, the extremal Forrelation problem is classically hard. Our main theorem establishes a bounded-error randomized lower bound of $2^{\Omega(n)}$ for this problem. In contrast, there is a simple quantum algorithm~\cite{Aar10} that solves this problem with one quantum query and success probability one.

In the following section, we formally define the Forrelation problem, describe the history of the problem, and state our main results.

\subsection{Forrelation Problem}
To describe the Forrelation problem, we first need to introduce the concept of the Fourier transform and the Forrelation function. The Boolean Fourier transform, also known as the Walsh-Hadamard transform, is a central concept in Boolean function analysis which has applications to learning theory, social choice theory, circuit complexity, property testing, and quantum versus classical separations. It is defined as follows.  
\begin{definition}[Fourier Transform]\label{def:fourier_transform} 
For $f:\bin^n\to \R$, define $\widehat{f}:\bin^n\to \R$ 
by
\[\widehat{f}(y):= \frac{1}{2^n}\sum_{x\in\bin^n}f(x) (-1)^{\abra{x,y}}\text{ for all }y\in\bin^n. \]
\end{definition}

We now define the Forrelation function, which has close connections with the Fourier transform of Boolean functions. The input to this function consists of the truth tables of two Boolean functions $f$ and $g$ and the output is the correlation between $g$ and the Fourier transform of $f$.

\begin{definition}[Forrelation Function] The function $\forr$ is defined as follows. For Boolean functions $f,g:\bin^n\to \{\pm 1\}$, define
\begin{align}
\forr(f,g)&:=\frac{1}{2^{n/2}}\sum_{y\in \bin^n}\widehat{f}(y)g(y) \label{eq:def_forr}\\ 
& = \frac{1}{ 2^{3n/2}} \sum_{x,y\in\bin^n} f(x)g(y)(-1)^{\abra{ x,y}}. \nonumber
\end{align}
\end{definition}
It is not too difficult to see that for any pair of Boolean functions $(f,g)$, the value of $\forr(f,g)$ is between $-1$ and $1$. Indeed, applying Cauchy-Schwarz on~\Cref{eq:def_forr} implies that $ |\forr(f,g)|\le 2^{-n/2} \sqrt{\sum_y \widehat{f}(y)^2} \sqrt{\sum_y g(y)^2}.$ Since $f$ and $g$ are $\pm 1$-valued functions, Parseval's theorem implies that $\sum_y \widehat{f}(y)^2=\Ex_{\bm{x}}[f(\bm{x})^2]=\Ex_{\bm{y}}[g(\bm{y})^2]=1$ and it follows that $|\forr(f,g)|\le 1$.

\begin{figure}
\centering
\mbox{ 
\Qcircuit @C=1em @R=.7em {
&\ket{+}  &  &  \qw & \multigate{4}{O_{f,g}}  & \ctrl{1} \qwx[1]  & \ctrl{2} \qwx[1]  & \ctrl{3} \qwx[1]& \ctrl{4} \qwx[1] & \gate{H}& \meter \\
& \ket{+} &  &     \qw & \ghost{O_{f,g}}  &\gate{H} & \qw  &\qw &\qw  & \qw  \\ 
& \ket{+}  &  &   \qw & \ghost{O_{f,g}} &\qw &\gate{H} &\qw &\qw  & \qw \\
& \ket{+}  &  &   \qw  & \ghost{O_{f,g}}& \qw &\qw &\gate{H} &  \qw &   \qw   \\
& \ket{+}  & &   \qw & \ghost{O_{f,g}}& \qw & \qw & \qw & \gate{H}  &  \qw   \\
}
}
\caption{Quantum circuit for the Forrelation problem for $n=5$. Here, $\ket{+}=\tfrac{1}{\sqrt{2}}(\ket{0}+\ket{1})$ and $O_{f,g}$ is the oracle mapping basis states $\ket{0}\ket{x}$ to $\ket{0}\ket{x}f(x)$ and $\ket{1}\ket{x}$ to $\ket{1}\ket{x}g(x)$.}
\label{fig:qcircuit}
\end{figure}

The Forrelation function is also interesting from the perspective of quantum algorithms as it can be interpreted as the bias of a certain one-query quantum algorithm.  More precisely, Aaronson~\cite{Aar10} gave a simple quantum query algorithm that makes one call to $O_{f,g}$ and returns $1$ with probability precisely $\frac{1}{2}+\frac{\forr(f,g)}{2}$ (see~\Cref{fig:qcircuit} for an illustration of the algorithm.) The intuition for this quantum algorithm comes from the ability of quantum circuits to implement the Fourier transform.
We can view the Fourier transform as a unitary map that transforms the truth table of $f$ into that of $\widehat{f}$ (up to a normalization factor of $2^{n/2}$). Additionally, this unitary map turns out to be a Hadamard matrix, i.e., a tensor product of $n$ Hadamard gates. In contrast, it seems difficult to estimate the Forrelation of $f$ and $g$ using only classical queries to the truth tables of $f$ and $g$. This motivates the definition of the Forrelation problem, where the goal is to estimate the Forrelation function up to a small additive error. 

\begin{definition}[\textsc{Forrelation Problem}] \label{def:forr_problem}
Fix a parameter $0\le \eps\le 1$. Given query access to the truth tables of Boolean functions $f,g:\bin^n\to \{\pm 1\}$ that are promised to satisfy either
\begin{itemize}
    \item \yes case: $\forr(f,g)\ge \eps$, or
    \item  \no case: $\forr(f,g)\le -\eps$, 
\end{itemize}
distinguish between the two cases.\footnote{In prior works, the goal of the Forrelation problem is usually to distinguish $\forr(f,g)\ge \eps$ from $\forr(f,g)\le \eps/2$.  Nevertheless, existing lower bounds also work for our variant of the problem; the proofs can be modified to produce two distributions supported on $\forr(f,g)\ge \eps$ and $\forr(f,g)\le -\eps$, respectively, such that they are each indistinguishable from the uniform distribution over $(f,g)$ for classical algorithms of small cost.}
\end{definition}

As mentioned before, there is a simple quantum algorithm that solves the Forrelation problem with one query, and the success probability of the algorithm is precisely $\frac{1}{2}+\frac{\eps}{2}$ where $\eps$ is the underlying parameter. In particular, when $\eps=1$, the algorithm makes no error.  

\paragraph*{Limitations of Prior Works on Forrelation.} Numerous works have established classical hardness of the Forrelation problem for $\eps$ bounded away from 1~\cite{Aar10,AA15,RT22,BS21}. We will describe these results in more detail in~\Cref{fig:tab2}, but all these works share a common limitation, which is that they only establish hardness for estimating the Forrelation \emph{up to a global constant less than one}. The best-known constant is due to Aaronson and Ambainis~\cite{AA15}, who showed that the Forrelation problem with  $\eps=2/\pi-o(1)$ requires $2^{\Omega(n)}$ randomized queries. 

\subsection{Our Results}

We consider the Forrelation problem with $\eps=1$, and call this the \maxforrelationproblem. Here, we are given query access to Boolean functions $f,g:\{0,1\}^n\to \{\pm 1\}$ that are promised to satisfy $|\forr(f,g)|=1$, and we wish to tell whether $\forr(f,g)=1$ or $\forr(f,g)=-1$. As mentioned before, the quantum algorithm for this problem follows immediately from the works of~\cite{Aar10,AA15}. When this algorithm is run on inputs to the \maxforrelationproblem ~it makes one query and solves the problem with success probability one.
Our main theorem is that the classical {randomized bounded-error} query complexity of this problem is $2^{\Omega(n)}$. 
\begin{theorem}\label{thm:main_theorem} The~\maxforrelationproblem, which is solvable with one quantum query and success probability one, requires $\tilde{\Omega}(2^{n/4})$ queries for any classical randomized query algorithm that succeeds with at least $2/3$ probability.
\end{theorem}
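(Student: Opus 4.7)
The plan has three parts: characterize the extremal instances, construct a hard input distribution, and bound the transcript TV distance by a linear-algebraic/birthday argument over $\bin$. First, equality in the Cauchy--Schwarz bound sketched just after~\Cref{def:forr_problem} forces $g(y) \propto \widehat f(y)$ pointwise; since $g$ is $\bits$-valued this in turn forces $|\widehat f(y)| = 2^{-n/2}$ for all $y$, so $f$ must be a \emph{bent} function and $g = \eps f^*$ with $\eps \in \bits$, where $f^*(y) := 2^{n/2}\widehat f(y) \in \bits$ denotes the dual bent function. Hence \maxforrelationproblem{} is exactly: given query access to a bent $f$ and to $g = \eps f^*$, determine $\eps$. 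By Yao's minimax principle, it suffices to exhibit a distribution $\mathcal D$ on bent functions such that no deterministic $q$-query algorithm can distinguish $(f,f^*)$ from $(f,-f^*)$ for $f \sim \mathcal D$ with advantage $\ge 1/3$ whenever $q = \tilde O(2^{n/4})$.

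For the hard distribution, write $n = 2m$ and let $g_0(x,w) := (-1)^{\abra{x,\pi(w)}+h(w)}$ be a Maiorana--McFarland bent function on $\bin^m \times \bin^m$, with $\pi$ a uniformly random permutation of $\bin^m$ and $h:\bin^m \to \bin$ uniformly random. I would then set $f(z) := g_0(Bz + c)$ for a uniformly random invertible $B \in \bin^{n\times n}$ and $c \in \bin^n$. Composition with an invertible affine map preserves bentness, and $f^*$ has a closed form via $\widehat{g_0 \circ T} = \widehat{g_0} \circ T^{-\top}$ (up to a sign depending on $c$). The random $B$ is essential: for the untwisted $g_0$ an adversary can mount the attack $g_0(u,v_1)g_0(u,v_2) = (-1)^{\abra{\pi^{-1}(u),\, v_1+v_2}}$, recover $\pi^{-1}(u)$ in $O(n)$ queries, and finish with two cross-queries. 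After twisting by $B$ this attack requires the adversary to find two $g$-queries whose difference lies in a specific hidden $m$-dimensional subspace of $\bin^n$, and by a birthday argument the probability that $q$ queries contain any such pair is $O(q^2 / 2^m) = O(q^2 / 2^{n/2})$, matching the target threshold $q \asymp 2^{n/4}$.

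To conclude via Yao, let $T_{\text{yes}}, T_{\text{no}}$ be the transcripts under the two hard distributions. Since the no distribution is obtained from the yes distribution by globally negating $g$, and the uniform distribution on $\bits^{2q}$ is invariant under negating any subset of coordinates, $\dtv(T_{\text{yes}}, T_{\text{no}}) \le 2\, \dtv(T_{\text{yes}}, \unif)$. At each step $i$ I would define a bad event $B_i$ -- the $i$-th query's image under the hidden coordinate system coinciding with that of a previously queried point, either on the ``$w$-side'' for $f$-queries, the ``$u$-side'' for $g$-queries, or across the two sides (which also needs a collision through $\pi$) -- and show that (a) conditional on past events not being bad, $\Pr[B_i] = O(i/2^{n/2})$ using the randomness of $B,c,\pi$; and (b) off $\bigcup_i B_i$, the fresh $\pi$- or $h$-value involved in the $i$-th query renders the response uniform on $\bits$ and independent of everything before. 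A union bound then gives $\dtv(T_{\text{yes}}, \unif) = O(q^2/2^{n/2})$, which is $o(1)$ for $q \le 2^{n/4}/\polylog(n)$.

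The main obstacle is the adaptive analysis: an adaptive adversary could in principle try to accelerate discovery of the hidden $m$-dimensional subspaces induced by $B$ by exploiting correlations between $f$- and $g$-queries in ways not captured by the naive non-adaptive birthday attack. Formalizing the bad events $B_i$ as explicit $\bin$-linear conditions on the sequence of queried points and the secret parameters $(B,c,\pi,h)$, and controlling each of them conditional on the past transcript (in particular keeping track of how much the algorithm has already pinned down about $B$ when $B_i$ is evaluated), is where I expect the bulk of the technical work to lie.
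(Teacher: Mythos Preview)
Your approach is essentially the paper's: Maiorana--McFarland bent functions composed with a random invertible affine map, analyzed via a collision argument showing the hidden $h$-arguments are distinct with probability $1-O(q^2/2^{n/2})$, whence the responses are i.i.d.\ uniform bits. Two simplifications the paper makes are worth noting. First, the permutation $\pi$ is unnecessary; the paper takes $\pi=\mathrm{id}$ and lets the random matrix alone hide the inner-product structure, which spares you from tracking $\pi^{-1}$ in the dual. Second---and this dissolves the obstacle you flag---adaptivity is handled not by a step-by-step martingale over events $B_i$, but by fixing an arbitrary root-to-leaf path $\cP$ of the deterministic tree (so the $\le D$ query points along $\cP$ are fixed, independent of all randomness), union-bounding the collision event over the $\le D^2$ pairs for \emph{that} path, and noting that conditioned on no collisions the responses along $\cP$ are exactly $2^{-\ell}$-likely because $h$ is sampled independently of the matrix and shift; since this holds for every leaf, each leaf distribution pointwise dominates $9/10$ times the uniform-walk distribution, and the TV bound follows. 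Finally, the affine shift is precisely what makes the cross-collision between $f$- and $g$-queries tractable: in the paper's parametrization the event becomes $\bB_\up\ba = (\text{something independent of }\ba)$ with $\ba$ uniform and $\bB_\up$ of full row rank, giving probability exactly $2^{-n/2}$---so the ``bulk of the technical work'' you anticipate is in fact a one-line observation once the construction is set up this way.
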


We also analyze a variant of this problem where the oracles $f,g$ are efficiently computable.


\begin{corollary}\label{thm:main_corollary} Suppose one-way functions exist against classical $\poly(n)$-time algorithms. Then, there is no $\poly(n)$-time randomized algorithm that solves the~\maxforrelationproblem~with probability at least $2/3$, even if the oracles $f,g$ are computable by $\poly(n)$-sized classical circuits.
\end{corollary}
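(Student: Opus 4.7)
My plan is to derandomize the hard distributions from \Cref{thm:main_theorem} using pseudorandom functions (PRFs), which exist under our one-way function assumption via the GGM construction. Recall that the proof of \Cref{thm:main_theorem} is expected to produce distributions $\Dyes, \Dno$ over instances with $\forr(f,g) = +1$ and $\forr(f,g) = -1$ respectively, such that every $\poly(n)$-query randomized algorithm has distinguishing advantage $o(1)$ between them. I would examine the sampling procedures for these distributions---which, through the paper's bent-function connection, should consist of a linear-algebraic construction fed by a stream of random bits---and define pseudorandom counterparts $\Dyes^{\prf}, \Dno^{\prf}$ by sampling a short PRF seed $k$ and using $F_k$ in place of true randomness. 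Two properties must be verified: (i) every seed $k$ yields a legal $(f_k, g_k)$ with $\forr(f_k, g_k) = \pm 1$, which holds provided the ``$\forr = \pm 1$'' invariant is maintained by algebraic structure rather than by statistical concentration, as should be the case for bent-function constructions such as Maiorana--McFarland; and (ii) with $k$ hardwired, $f_k$ and $g_k$ are computable by $\poly(n)$-sized classical circuits, since each evaluation requires only a constant number of PRF queries plus polynomial post-processing.

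The reduction itself is a standard hybrid argument. Assume toward a contradiction that $\cA$ is a $\poly(n)$-time algorithm solving the \maxforrelationproblem~with success probability at least $2/3$ on small-circuit inputs; then $\cA$ distinguishes $\Dyes^{\prf}$ from $\Dno^{\prf}$ with advantage at least $1/3$. By the triangle inequality, this advantage is bounded by the sum of three quantities: the advantage of $\cA$ in distinguishing $\Dyes$ from $\Dno$, the advantage in distinguishing $\Dyes$ from $\Dyes^{\prf}$, and the advantage in distinguishing $\Dno$ from $\Dno^{\prf}$. The first quantity is $o(1)$ by \Cref{thm:main_theorem} (since $\cA$ makes $\poly(n) = \tilde{o}(2^{n/4})$ queries), so at least one of the remaining two must be $\Omega(1)$. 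Either case yields a $\poly(n)$-time algorithm distinguishing the PRF from a truly random function with constant advantage, contradicting PRF security and hence our one-way function assumption.

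The main obstacle is the ``simulability'' step in property (i) above: verifying that the sampling procedure for the hard distribution can be expressed as a deterministic polynomial-time post-processing of random bits, and that it always produces legal $\forr = \pm 1$ instances (rather than doing so only with high probability over the randomness). I expect this to follow readily from the linear-algebraic nature of the construction emphasized in the introduction---with a pseudorandom permutation substituted for any place where a truly random permutation appears---but it is the one place in the argument where we rely on the detailed form of the hard distribution from \Cref{thm:main_theorem} rather than on its theorem statement alone.
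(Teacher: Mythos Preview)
Your proposal is correct and matches the paper's approach almost exactly: the paper replaces the truly random function $h$ in the Maiorana--McFarland construction by a PRF (obtained from one-way functions via \cite{HILL99,GGM86}), verifies that the resulting $(f,g)$ are $\poly(n)$-size computable and still satisfy $\forr = \pm 1$ (this is \Cref{lemma:hard_distributions}, which holds for \emph{any} choice of $h$, confirming your ``algebraic rather than statistical'' expectation), and then runs exactly your triangle-inequality hybrid. The one simplification in the paper is that only $h$ needs to be derandomized---the matrix $\bA$ and shift $\ba$ are already $\poly(n)$-bit objects that can be sampled once and hardwired into the circuit---so no pseudorandom permutation is needed.
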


We remark that the above result is proved in the black-box setting where the algorithm can only query the truth tables of $f,g$. We do not know if these results apply in the white-box setting where the algorithm is given an explicit description of a small circuit computing $f,g$. 

\paragraph*{}We now mention a few consequences of our results.
 
\paragraph*{Lower Bounds for the Forrelation Problem.} 

\begin{figure}
\begin{center}
\begin{tabular}{ |m{6em}|m{5em}|m{13em}| m{11em}| } 
 \hline 
 &   Forrelation Problem & Classical Model & Classical Lower Bound  \\  
  \hline
~\cite{Aar10}  & $\eps=0.05$   & Randomized decision tree & $\tilde{\Omega}(2^{n/4})$ queries (depth)
  \\ \hline 
  ~\cite{AA15} & $\eps=2/\pi$  &Randomized decision tree   & $\tilde{\Omega}(2^{n/2})$ queries \\ \hline
    ~\cite{RT22} & $\eps=\Theta(1/n)$  &Randomized AC0 circuits   & $\exp\pbra{2^{\Omega(n/d)}}$ size \\ \hline
    ~\cite{BS21} & $\eps=1/2^{10}$  &Randomized AC0 circuits   & $\exp\pbra{2^{\Omega(n/d)}}$ size \\ \hline
  {\bf  [This work]} & $\eps=1$  &{\bf Randomized decision tree}  & $\tilde{\Omega}(2^{n/4})$ {\bf queries} \\ \hline
 \end{tabular}\caption{Lower Bounds for the Forrelation Problem}\label{fig:tab2}
\end{center}\end{figure}

Numerous works have studied lower bounds for the Forrelation problem. See~\Cref{fig:tab2} for a summary. 
The first classical lower bound for the Forrelation problem was established by Aaronson~\cite{Aar10}, who showed a $\tilde\Omega(2^{n/4})$ lower bound when $\eps=0.05$. This was improved to a $\tilde{\Omega}(2^{n/2})$ lower bound for $\eps=2/\pi-o(1)$  by Aaronson and Ambainis~\cite{AA15}. In their breakthrough result, Raz and Tal~\cite{RT22} proved lower bounds when $\eps=\Theta(1/n)$. Although this choice of $\eps$ is smaller than in the previously mentioned works, the true strength of~\cite{RT22} lies in their classical lower bound which holds against a much more powerful model than classical query algorithms. Bansal and Sinha~\cite{BS21} strengthened this result by proving it in the regime of $\eps=\Theta(1)$. It is worth emphasizing that a crucial aspect of this work -- and a key reason they were able to resolve the conjecture of Aaronson and Ambainis on maximal separations -- was their focus on lower bounds in the regime of $\eps=\Theta(1)$ as opposed to $\eps=\Theta(1/n)$. All of this underscores the difficulty and importance of proving lower bounds for the Forrelation problem, particularly as $\eps$ increases. In particular, as mentioned earlier,~\cite{AA15} have asked the following question (open question \#4): how large $\eps$ can be for the Forrelation problem while remaining classical hard with $2^{\Omega(n)}$ queries? Our main theorem shows that even when $\eps$ is as large as can be (i.e., one), the Forrelation problem requires $2^{\Omega(n)}$ classical queries.   



\paragraph*{Efficient Oracle Separation.} In~\cite{AC17}, Aaronson and Chen ask, what happens if we consider quantum algorithms that can access an oracle, but we impose a constraint that the oracle has to be “physically realistic”? The motivation is to design a quantum advantage experiment by studying query complexity separations where the input oracles are implementable by small classical circuits. Motivated by this, we ask
\begin{quote}\emph{How hard is it to estimate $\forr(f,g)$ when $f,g$ are computable by $\poly(n)$-sized circuits?}\end{quote}
As an easy corollary of our main result (\Cref{thm:main_corollary}), we show that if (classically secure) one-way functions exist, then there is no classical $\poly(n)$-time algorithm for the Forrelation problem, even if $f,g$ are computable by polynomial-sized circuits. As mentioned before, our result is a query complexity lower bound that holds in the black-box setting where the algorithm can only query the truth tables of $f,g$ and does not have an explicit description of the circuits computing $f,g$.

\begin{figure}
\begin{center}
\begin{tabular}{ |m{11em}|m{8em}|m{5em}|m{5em}| } 
 \hline
 &  Exact Quantum  & \multicolumn{2}{|c|}{Classical Lower Bound}  \\ 
  \cline{3-4}
& Queries & Queries & Success Probability\\ 
  \hline
 Deutsch-Jozsa~\cite{DJ92}  & 1 & $2^{\Omega(n)}$ & 1 \\  
 &   & O(1) & $2/3$ \\ \hline
 Simon's Problem~\cite{Sim97,BH97} & $O(n)$ (adaptive)   & $2^{\Omega(n)}$ & $2/3$ \\  \hline 
 Welded Tree~\cite{childs03,LLL23}  & $O(n^{2.5})$  (adaptive)    & $2^{\Omega(n)}$ & $2/3$ \\  \hline
Order Finding (QFT) ~\cite{MZ04,cleve04}  & $O(n)$ (parallel), can be made $1$      & $2^{\Omega(n)}$ & $2/3$ \\  \hline
Hidden Linear Structure (QFT)~\cite{BCW02}  & $O(n)$  (parallel), can be made $1$      & $2^{\Omega(n)}$ & $2/3$ \\  \hline
{\bf \maxforrelationproblem~ [this work]} & 1   &  $2^{\Omega(n)}$ & $2/3$ \\ \hline
 \end{tabular}\caption{Speedups with Exact Quantum Computation}\label{fig:tab1}\end{center}
\end{figure}

\paragraph*{Power of Exact Quantum Computation.}

Exact algorithms are  algorithms that make no error, that is, they succeed with probability one. 
The power of exact quantum algorithms has been studied extensively in the past; see~\Cref{fig:tab1} for a summary. 
Numerous works have tried to understand the best possible separations between exact quantum and classical algorithms for total functions~\cite{Amb12,ABB17,BCWZ99}. For partial functions, one of the earliest results is due to Deutsch–Jozsa~\cite{DJ92}, who showed that distinguishing between the constant function and a balanced function can be solved by an exact quantum algorithm with one query, but any zero-error randomized algorithm requires $2^{\Omega(n)}$ queries. However, the randomized query complexity drops to $O(1)$ if the algorithm is allowed to err with small probability. The first exponential speedup over bounded-error randomized algorithms is demonstrated by Simon's problem~\cite{Sim97}, which can be solved exactly~\cite{BH97} with $O(n)$ quantum queries, but requires $2^{\Omega(n)}$ classical queries in the bounded-error model. Since then, there have been other problems showing a similar separation using Discrete Fourier transforms, with the additional advantage that all the queries can be made in parallel~\cite{cleve04,MZ04,BCW02}.  While many of these works describe their quantum algorithms as making one query, in their models one can retrieve the truth table values at $n$ different points with just one query; this corresponds to making $O(n)$ parallel queries in the standard model. In particular, they consider query access of the form $\ket{x}\to\ket{f(x)}$ where $f$ has $n$-bit outputs, whereas the standard model only allows single-bit outputs. 
At first glance, it appears that such quantum algorithms need to make $\Omega(n)$ queries. However, using this, one can obtain a new Boolean function on $2n$-bit inputs with an \emph{exact one-query} quantum algorithm such that every randomized algorithm requires $2^{\Omega(n)}$ queries~\footnote{We thank the anonymous reviewer for TQC 2025 pointing this out.\label{footnote:2}}.  The idea here is to replace  $f(x)$ by the Hadamard encoding of $f(x)$ and to use the Bernstein-Vazirani algorithm. Our main result (\Cref{thm:main_theorem}) achieves a similar separation, arguably for a simpler function and a simpler quantum protocol.

\subsection{Outlook \& Future Directions}

We now highlight a number of open questions inspired by our work.

\paragraph*{Optimal Separations.} The bounded-error randomized query complexity of~\maxforrelationproblem~remains to be understood. The results of~\cite{AA15,BGGL22} implies a $\tilde{O}(2^{n/2})$ upper bound, while we prove an $\tilde{\Omega}(2^{n/4})$ lower bound in~\Cref{thm:main_theorem}. We conjecture that our lower bound can be improved to $\tilde{\Omega}(2^{n/2})$, matching the upper bound from~\cite{AA15,BGGL22}. This would recover the separation from~\cite{AA15}, with the additional advantage that the quantum algorithm succeeds with probability one. 
\begin{conjecture}
The bounded-error randomized complexity of the \maxforrelationproblem~is $\tilde{\Omega}(2^{n/2})$.
\end{conjecture}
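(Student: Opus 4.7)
The plan is to strengthen the $\tilde{\Omega}(2^{n/4})$ bound of~\Cref{thm:main_theorem} by a sharper analysis of the same bent-function-based hard distribution, rather than by designing a new one. Take $\Dyes$ to be the distribution over $(f,\tilde{f})$ for a random bent $f$, and $\Dno$ to be the distribution over $(f,-\tilde{f})$, where $\tilde{f}$ denotes the bent dual. By the standard distinguishing inequality, it suffices to show that the $t$-query transcript distributions under $\Dyes$ and $\Dno$ have total variation distance $o(1)$ whenever $t = o(2^{n/2}/\polylog(n))$.

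Next, I would replace the birthday/collision argument (which bottlenecks at $2^{n/4}$ because collisions among pairs of queries already produce nontrivial signal) with a Fourier/second-moment argument, in the spirit of~\cite{AA15} but adapted to the discrete bent-function setting. Using the defining identity $f(x) = 2^{-n/2}\sum_{y}\tilde{f}(y)(-1)^{\langle x,y\rangle}$, every product of queried $f$-values becomes a character sum in the dual values of $\tilde f$. The main technical target is a moment bound of the form
\[
\left|\Ex_{f}\left[\prod_{i=1}^{t} f(x_i)\,\prod_{j=1}^{s} \tilde f(y_j)\right]\right| \leq 2^{-\Omega(n)}
\]
for any query multisets $\{x_i\},\{y_j\}$ that do not satisfy a trivial parity/dependence constraint on $\F_2^n$, together with the claim that such constraints are rarely met when $s+t \ll 2^{n/2}$. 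Plugging these moment bounds into a standard Fourier expansion of $\dtv$ over query-answer tuples would yield the claimed $\tilde{\Omega}(2^{n/2})$ bound.

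The main obstacle is establishing this moment bound over a genuine distribution on bent functions. A natural starting point is the Maiorana-McFarland family $f(x_1,x_2) = \pi(x_1)\cdot x_2 \oplus h(x_1)$, with $\pi$ a uniformly random permutation of $\F_2^{n/2}$ and $h$ uniform, whose dual admits an explicit closed form. Under this distribution the moments reduce to character sums over random permutations, controlled by the combinatorics of partial injective maps on $\F_2^{n/2}$. The delicate point is the cross-terms that couple many $x_i$'s with many $y_j$'s: these are precisely the terms whose coherent contribution drove the $2^{n/4}$ barrier, and I would need to show that after averaging over $\pi$ they cancel rather than accumulate. If the Maiorana-McFarland subfamily proves too algebraically rigid for the sharp bound, a fallback is to average over a richer class of bent functions, or to compose the bent construction with a lightweight pseudorandom scrambling that is undetectable in $o(2^{n/2})$ queries but suppresses the surviving low-order correlations, in the spirit of the construction used for~\Cref{thm:main_corollary}.
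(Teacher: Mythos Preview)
The statement is a \emph{conjecture} in the paper, not a theorem; the paper proves only the weaker $\tilde{\Omega}(2^{n/4})$ bound (\Cref{thm:main_theorem}) and explicitly lists the $\tilde{\Omega}(2^{n/2})$ bound as an open problem. There is therefore no proof in the paper to compare your proposal against.

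Your proposal is likewise not a proof but a research plan, and you say so yourself: the central step---the moment bound for products of $f$- and $\tilde f$-values over a genuine bent-function distribution---is flagged as ``the main obstacle,'' with no argument given for why the cross-term cancellations over the random permutation actually occur, and with fallback options (richer bent families, pseudorandom scrambling) listed only as directions to try. The difficulty you point to is exactly the one that matters: the paper's $2^{n/4}$ bound is bottlenecked at the pairwise-collision union bound, and going beyond it would require showing that even once collisions among the $\bh$-evaluation points of \Cref{eq:key-sequence} occur at a constant rate (which they do once $D\approx 2^{n/4}$), they still leak only $o(1)$ information about the sign of $g$. Neither the paper nor your sketch supplies that analysis. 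As written, your proposal is a reasonable outline of where to look, but it does not resolve the conjecture.
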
 
More broadly, one can ask about optimal separations between quantum and classical query complexity. This question was studied by~\cite{AA15}, who introduced a variant of Forrelation called $k$-Forrelation whose quantum query complexity is $\lceil k/2\rceil$ and conjectured that its randomized query complexity is $\tilde{\Omega}(2^{n\cdot (1-1/k)})$. When $k=2$, this problem is identical to the Forrelation problem as in~\Cref{def:forr_problem}. They further conjectured that this is the best possible separation between bounded-error quantum and randomized query complexity. There have been a number of works on this topic~\cite{BGGL22,Tal20}, culminating in the works of~\cite{sherstov2023optimal} and~\cite{BS21} that proved this conjecture. One can ask if a similar separation can be achieved by \emph{exact} quantum algorithms. We conjecture that this can indeed be achieved.
\begin{conjecture}
For $k>2$, the bounded-error randomized complexity of the extremal version of the $k$-Forrelation problem is $\tilde{\Omega}(2^{n (1-1/k)})$.
\end{conjecture}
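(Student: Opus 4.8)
The plan is to run the same three-part template behind \Cref{thm:main_theorem}, now at the $k$-fold level: (i) characterize the extremal instances of $k$-Forrelation via \emph{chains of bent functions}; (ii) use this characterization to plant hard distributions $\Dyes,\Dno$ supported on such instances; and (iii) show that no classical algorithm making $o(2^{n(1-1/k)})$ queries distinguishes them.

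For step (i): writing $\psi:=2^{-n/2}\mathbf{1}$ for the uniform vector, $\cF$ for the $2^n\times 2^n$ Hadamard unitary, and $D_f$ for the diagonal $\pm1$ matrix with $f$ on its diagonal, the natural $k$-fold Forrelation satisfies $\forr_k(f_1,\dots,f_k)=\abra{\psi,\; D_{f_1}\cF D_{f_2}\cF\cdots\cF D_{f_k}\,\psi}$, which for $k=2$ is the identity underlying \Cref{thm:main_theorem}. Every factor is unitary, so Cauchy--Schwarz gives $\abs{\forr_k}=1 \iff D_{f_1}\cF D_{f_2}\cdots\cF D_{f_k}\,\psi=\pm\psi$. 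Unwinding this identity one factor at a time (using $\cF^2=I$), I would prove by induction that $\forr_k(f_1,\dots,f_k)\in\bits$ exactly when, with $\wt h:=\cF h$ — which is $\pm1$-valued iff $h$ is bent — the functions $p_1:=f_1$ and $p_i:=f_i\cdot\wt{p_{i-1}}$ for $2\le i\le k-1$ are \emph{all bent} and $f_k=\pm\wt{p_{k-1}}$ with the sign equal to $\forr_k$. Call such a tuple a \emph{bent chain}; since $\forr_k(f_1,\dots,f_k)=\forr_k(f_k,\dots,f_1)$ the characterization is symmetric under reversing the list, and for $k=2$ it reduces to ``$f_1$ bent and $f_2=\pm\wt{f_1}$.''

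For step (ii): I would build a random bent chain. Draw independent uniformly random bent functions $p_1,\dots,p_{k-1}$ (from whichever family, e.g.\ Maiorana--McFarland, makes the analysis cleanest), set $f_1:=p_1$, $f_i:=p_i\cdot\wt{p_{i-1}}$ for $2\le i\le k-1$ (so that $f_i\cdot\wt{p_{i-1}}=p_i$, as $\wt{p_{i-1}}$ is $\pm1$-valued), and $f_k:=\wt{p_{k-1}}$ under $\Dyes$, $f_k:=-\wt{p_{k-1}}$ under $\Dno$. By step (i), every draw satisfies $\forr_k=+1$ (resp.\ $-1$), and $\Dyes$ and $\Dno$ differ only in the global sign of $f_k$: a single bit that the Fourier transform diffuses across all $2^n$ values of the random bent function $p_{k-1}$, which is in turn Fourier-linked to $p_{k-2}$, and so on down the chain. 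This is precisely what a classical algorithm must reconstruct.

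Step (iii) is the crux, and the main obstacle. The goal is to show that $\dtv$ between the query transcripts under $\Dyes$ and $\Dno$ is $o(1)$ whenever $q=o(2^{n(1-1/k)})$, e.g.\ by arguing each of $\Dyes,\Dno$ is $o(1)$-close in transcript to the uniform distribution over $(f_1,\dots,f_k)$. The intuition is one layer of indirection per function: the answer to a query $f_i(\bx)=p_i(\bx)\cdot\wt{p_{i-1}}(\bx)$ carries essentially no information about $\wt{p_{i-1}}(\bx)=\sign\!\pbra{\sum_{\bz}p_{i-1}(\bz)(-1)^{\abra{\bz,\bx}}}$, since this depends on all $2^n$ values of the random bent function $p_{i-1}$, of which only a vanishing fraction has been touched; compounding this across the $k$ functions is what should yield the exponent $1-1/k$. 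The difficulty is that even for $k=2$ the elementary argument proving \Cref{thm:main_theorem} gives only $\tilde\Omega(2^{n/4})$, short of the $\tilde\Omega(2^{n/2})$ predicted by the conjecture at $k=2$. So a proof of the general conjecture will likely have to (a) first close the $k=2$ gap — most plausibly by importing the moment / polynomial-method estimates of \cite{BS21,sherstov2023optimal} and verifying they survive conditioning on the bent-chain promise (a black-box reduction from ordinary $k$-Forrelation seems unlikely, since the extremal promise is strictly more restrictive and is not preserved by the usual Gaussian-rounding constructions) — and then (b) propagate the bound up the chain via a hybrid argument losing only $\poly(n)$ per layer. I expect the technical heart to be exactly the lemma that few queries leak only $o(1)$ joint information about the dual of a random bent function on the queried set; controlling that conditional distribution, rather than the surrounding bookkeeping, is where ideas beyond the present paper appear to be needed.
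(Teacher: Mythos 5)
This statement is a \emph{conjecture} in the paper, not a theorem---the authors explicitly leave it open and give no proof---so there is no paper argument to compare your sketch against. That said, your proposal is a reasonable program, and you yourself flag where it is incomplete, which is the right assessment.

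Your step (i) is a sound and useful observation: writing $\forr_k(f_1,\dots,f_k)=\abra{\psi,\,D_{f_1}\cF D_{f_2}\cdots\cF D_{f_k}\psi}$ with $\psi=2^{-n/2}\mathbf{1}$ and noting that unitarity plus Cauchy--Schwarz forces $D_{f_1}\cF\cdots\cF D_{f_k}\psi=\pm\psi$ at $\abs{\forr_k}=1$ does, after peeling off one factor at a time with $\cF^2=I$, force the intermediate $p_i$ to be bent. Be careful that the definition $p_i:=f_i\cdot\widetilde{p_{i-1}}$ is only a $\pm1$-valued function once $p_{i-1}$ is already known to be bent, so ``bent chain'' is genuinely an inductive condition; and at $k=2$ this correctly reduces to the paper's ``$f_1$ bent, $f_2=\pm\widetilde{f_1}$.'' Step (ii) is the obvious planted distribution, but to make the transcript analysis work you will almost certainly need the random affine/linear masking that the paper applies to its Maiorana--McFarland instances in \Cref{def:hard_distributions}, not a bare uniform draw from the family; without that masking, the paper's own collision argument would not go through even at $k=2$.

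The genuine gap is step (iii), which you correctly identify as the crux. Two concrete problems: first, the paper's elementary collision-counting argument already falls short of the conjectured bound at $k=2$ (it gives $\tilde{\Omega}(2^{n/4})$, not $\tilde{\Omega}(2^{n/2})$), so the exponent $1-1/k$ will not come from that method alone; second, the known $\tilde{\Omega}(2^{n(1-1/k)})$ lower bounds for ordinary $k$-Forrelation~\cite{BS21,sherstov2023optimal} are not layer-by-layer hybrids but global Fourier/moment arguments built on Gaussian-rounding constructions, which (as you note) do not respect the extremal promise and thus do not port over as a black box. So ``propagate by a hybrid losing $\poly(n)$ per layer'' is not clearly the right shape of argument. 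Your characterization in step (i) seems worth writing up carefully on its own merits, but the conjecture itself remains open after this sketch.
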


Resolving this conjecture would shed light on the best possible separations between exact quantum and bounded-error randomized query complexity for partial functions. 

\paragraph*{Forrelation of Low-Degree Polynomials}
Recently, there has been a lot of interest in studying the complexity of forrelation for instances in which $f$ and $g$ are degree-$d$ $\F_2$-polynomials~\cite{Geo25,Shu25}. It is not too difficult to define pairs of low-degree polynomials that have large Forrelation, for instance, in~\Cref{def:hard_distributions}, we can sample $h$ to be a random degree-$d$ $\F_2$-polynomial as opposed to a uniformly random Boolean function. This variant of Forrelation has been studied with the broad motivation of finding an efficient instantiation of an oracle which makes Forrelation classically hard. Towards this, researchers have tried to understand the  computational complexity of estimating Forrelation for low-degree polynomials~\cite{Geo25,Shu25}.

We believe that understanding the query complexity of this version of Forrelation is independently interesting. In this setting, given query access to low-degree polynomials $f,g$, we wish to compute $\forr(f,g)$ using as few queries to the truth tables of $f$ and $g$ as possible. One trivial algorithm for this is to learn the polynomials using $\binom{n}{\le d}$ queries each and then compute Forrelation offline. We conjecture that this algorithm is more or less optimal.
\begin{conjecture}
The \maxforrelationproblem~where the inputs are restricted to degree-$d$ $\F_2$-polynomials has randomized query complexity $n^{\Omega(d)}$. 
\end{conjecture}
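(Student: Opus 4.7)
The plan is to prove the lower bound via a distributional indistinguishability argument, adapting the bent-function perspective underlying~\Cref{thm:main_theorem} to the low-degree regime. The matching upper bound, obtained by learning both polynomials via $\sum_{i\le d}\binom{n}{i} = O(n^d)$ queries and computing $\forr$ offline, makes $n^{\Omega(d)}$ the right target in the exponent. By Yao's minimax principle, it suffices to exhibit two distributions $\mu_{\text{yes}}$ and $\mu_{\text{no}}$ supported on pairs $(f,g)$ of degree-$d$ $\F_2$-polynomials with $\forr(f,g)=+1$ and $-1$ respectively, which are indistinguishable by any deterministic $T$-query algorithm for $T = n^{cd}$ with a small constant $c>0$. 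Since $|\forr(f,g)|=1$ forces $f$ to be bent and $g$ to be pointwise $\pm 2^{n/2}\widehat{f}$, the task reduces to recovering a single global sign $\sigma\in\{\pm 1\}$.

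For the hard distribution I would use a low-degree variant of the Maiorana--McFarland construction: split the variables as $(x,y)\in\F_2^m\times\F_2^m$ with $m=n/2$, and set $\boldf(x,y) := (-1)^{\langle \bpi(x), y\rangle + \bh(x)}$, where $\bpi$ is a random bijection for which both $\bpi$ and $\bpi^{-1}$ have degree at most $d-1$ (for instance, a random affine bijection $\bpi(x)=\bA x + \bb$ with $\bA$ uniform in $\mathrm{GL}_m(\F_2)$), and $\bh$ is a uniform degree-$d$ polynomial on $\F_2^m$. A direct Fourier computation gives $\widehat{\boldf}(u,v) = \pm 2^{-n/2}$ with sign $(-1)^{\bh(\bpi^{-1}(v)) + \langle \bpi^{-1}(v), u\rangle}$, so setting $\bg(u,v) := \sigma \cdot (-1)^{\bh(\bpi^{-1}(v)) + \langle \bpi^{-1}(v), u\rangle}$ yields $\forr(\boldf,\bg)=\sigma$ while keeping $\bg$ of degree $d$ (easily checked in the affine-$\bpi$ case). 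The distributions $\mu_{\text{yes}}$ and $\mu_{\text{no}}$ correspond to $\sigma=+1$ and $\sigma=-1$ respectively, with $(\bpi,\bh)$ drawn from the same marginal in both cases.

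The distinguishing advantage would then be bounded by a hybrid/statistical argument over the $T$ query--answer pairs, analogous to the proof of~\Cref{thm:main_theorem}. Each evaluation of $\boldf$ or $\bg$ imposes a single equation on the $\Theta(n^d)$ coefficients that parametrize $(\bpi, \bh)$, while $\sigma$ only enters the observations by globally flipping $\bg$. Consequently, as long as $T \ll n^d$, the set of parameter vectors consistent with the transcript should still be balanced between the two values of $\sigma$, giving $o(1)$ total variation distance between the transcript distributions. The principal obstacle is implementing this mixing rigorously: unlike the uniformly random bent functions used in \Cref{thm:main_theorem}, degree-$d$ polynomials have only $O(n^d)$ degrees of freedom and are far from pairwise independent on their evaluation points, so a careful analysis is needed to rule out clever algorithms that exploit correlations between queries (especially between $\boldf$- and $\bg$-queries) to pin down $\sigma$ faster than brute-force learning. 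A secondary difficulty is enlarging the Maiorana--McFarland family beyond the affine case while keeping both $\bpi^{-1}$ and $\bh\circ\bpi^{-1}$ of degree at most $d$; natural alternatives (triangular low-degree bijections with degree-constrained $\bh$, or bent polynomials outside the Maiorana--McFarland class) may be required to match the $n^{\Omega(d)}$ exponent in full generality.
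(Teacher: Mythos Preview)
The statement you are addressing is listed in the paper as an open \emph{conjecture}; there is no proof in the paper to compare against, and the authors explicitly flag it as unresolved. Your proposal is thus an attempt at an open problem, and should be read as such.

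Unfortunately, the concrete hard distribution you write down is broken by an $O(n)$-query adaptive attack, for every $d$. With the fixed coordinate split $(x,y)\in\F_2^m\times\F_2^m$ and $\boldf(x,y)=(-1)^{\langle\bpi(x),y\rangle+\bh(x)}$, setting $y=0$ kills the bilinear term entirely: $\boldf(x,0)=(-1)^{\bh(x)}$ for every $x$, regardless of $\bpi$. One query at $(0,0)$ reveals $(-1)^{\bh(0)}$; then $m$ queries $\boldf(0,e_i)=(-1)^{\langle\bpi(0),e_i\rangle+\bh(0)}$ reveal the vector $\bpi(0)$. Now a single adaptive query to $\bg$ at $(0,\bpi(0))$ returns
\[
\bg(0,\bpi(0))=\sigma\cdot(-1)^{\bh(\bpi^{-1}(\bpi(0)))+\langle\bpi^{-1}(\bpi(0)),0\rangle}=\sigma\cdot(-1)^{\bh(0)},
\]
from which the algorithm reads off $\sigma$. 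This uses $m+2=n/2+2$ queries and works for \emph{any} bijection $\bpi$, affine or not; the vulnerability is the fixed, publicly known split of the input into halves, which hands the adversary a subspace on which the inner-product term vanishes identically.

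The paper's own hard distribution (\Cref{def:hard_distributions}) avoids exactly this pitfall by applying the random invertible map $\bA$ to the \emph{entire} $n$-bit input before taking upper and lower halves, so no such subspace is known to the adversary. The paper also notes that sampling $\bh$ there as a random degree-$d$ polynomial keeps both $\boldf$ and $\bg$ of degree $d$. That is the natural family to analyze if you want to attack the conjecture; but then the genuine obstacle is the one you already identified: the proof of \Cref{thm:main_theorem} (specifically \Cref{claim:path_probability}) uses that a uniformly random $\bh$ yields independent uniform bits at distinct inputs, and a degree-$d$ polynomial does not enjoy this independence. Replacing that step with an algebraic argument---showing that the queried points are, with high probability, in sufficiently generic position that evaluations of a random degree-$d$ polynomial at them are still close to uniform---is where the real work lies, and is presumably why the statement remains a conjecture.
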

Proving this conjecture would give some evidence towards the hardness of computing forrelation even when the inputs are low-degree polynomials -- it would indicate that computing the forrelation of low-degree polynomials is almost as hard as learning the polynomials. 

\paragraph*{Hardness Against Stronger Classical Models}

One of the original motivations in~\cite{Aar10}  for introducing the Forrelation problem is that it was conjectured to be classically hard to compute, even for the relatively powerful model of $\mathsf{AC}^0$ circuits. This conjecture was proved in the breakthrough result of Raz and Tal~\cite{RT22}, resulting in an oracle separation between $\mathsf{BQP}$ and $\mathsf{PH}$. We conjecture that a variant of the extremal Forrelation problem can be used to demonstrate a similar separation. 
\begin{conjecture}
Depth-$d$ $\mathsf{AC}^0$ circuits require $\exp\pbra{2^{\Omega(n/d)}}$ size to solve the~\maxforrelationproblem. 
\end{conjecture}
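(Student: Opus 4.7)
The plan is to adapt the Raz--Tal proof of $\mathsf{AC}^0$ hardness for standard Forrelation, substituting their Gaussian rounding construction with the bent-function based extremal distributions that presumably underlie \Cref{thm:main_theorem}. First I would fix the hard pair of distributions $(\Dyes, \Dno)$ over $(f,g)$ supported on pairs with $\forr(f,g) = +1$ and $\forr(f,g) = -1$ respectively, and reduce the conjecture to showing that $\bigl|\E_{(f,g)\sim \Dyes}[C(f,g)] - \E_{(f,g)\sim \Dno}[C(f,g)]\bigr| = o(1)$ for every depth-$d$ circuit $C$ of size $s = \exp(2^{\Omega(n/d)})$.

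Second, following the Raz--Tal template, I would attempt to show the stronger statement that $\Dyes$ and $\Dno$ are each individually pseudorandom against $\mathsf{AC}^0$ of size $s$, i.e., indistinguishable from the uniform distribution on $(f,g)$. The standard tool here is Tal's improved $L_1$ level-$k$ Fourier bound for $\mathsf{AC}^0$, which gives $\sum_{|S|=k} |\widehat{C}(S)| \le O(\log s)^{k(d-1)}$. Combined with the fact that $\Dyes$ has mean zero on characters of degree $1$ (since $f$ and $g$ remain marginally uniform in the bent-function construction), the distinguishing advantage reduces to controlling sums of the form $\sum_{S,T}\widehat{C}(S,T) \cdot \E_{\Dyes}[\chi_S(f)\chi_T(g)]$ at levels $\ge 2$. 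The crucial level-$2$ contribution is exactly the Forrelation itself, averaged appropriately, which one can bound by carefully exploiting the $\F_2$-linear-algebraic structure in the bent construction.

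The main obstacle, and the reason we state this as a conjecture, is bounding the higher-level Fourier correlations induced by the bent-function structure. In the original Raz--Tal setting, $\Dyes$ and $\Dno$ arise from Gaussian rounding, which equips the yes/no distribution with near-product structure and controlled higher moments; their level-$k$ correlations with any product character decay like $\rho^{k}$ for a small $\rho$. In the bent-function setting, fixing the underlying $h$ imposes global algebraic dependencies on $f$ and $g$ across all $2^n$ coordinates simultaneously, so correlations with high-degree characters need not be small a priori. Concretely, a level-$k$ character $\chi_S \otimes \chi_T$ can detect whether $S$ and $T$ live in a low-dimensional subspace compatible with the hidden bent structure, and these correlations must be summed against Tal's $L_1$ bound.

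A natural route to bypass this obstacle is a two-step smoothing argument. First, define a \emph{noisy} extremal distribution $\Dyes^\eta$ by sampling $(f,g) \sim \Dyes$ and then independently flipping each bit of the truth tables with probability $\eta = 2^{-\Theta(n/d)}$; with high probability this gives $\forr(f,g) \in [1-o(1), 1]$, and the independent noise smoothes out precisely the dangerous high-level Fourier modes so that a Raz--Tal style analysis goes through. Second, reduce the exact-extremal case to the noisy case by showing that any $\mathsf{AC}^0$ distinguisher for $\Dyes$ vs.\ $\Dno$ yields one for the noisy distributions, perhaps via a hybrid argument exploiting closure of $\mathsf{AC}^0$ under composition with a small depth of noise. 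Executing this second reduction cleanly is what I expect to be the hardest step.
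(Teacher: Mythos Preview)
The statement you are attempting to prove is explicitly a \emph{conjecture} in the paper, not a theorem: the paper does not provide a proof, and indeed lists it under ``Outlook \& Future Directions'' as an open problem. So there is no paper proof to compare your proposal against. The paper's only remark on how a proof might go is the sentence ``We suspect that proving this conjecture would require studying a richer class of bent functions''---which is a different intuition from your noise-smoothing approach, and suggests the authors believe the Maiorana--McFarland family used for \Cref{thm:main_theorem} may itself be inadequate for the $\mathsf{AC}^0$ setting.

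As for the proposal itself: your identification of the obstacle is accurate---the bent-function construction imposes global $\F_2$-algebraic correlations that are not obviously controlled by the Raz--Tal moment machinery---but your proposed workaround has a structural gap. Once you pass to the noisy distribution $\Dyes^\eta$, the support no longer lies in $\{\forr = 1\}$; you only get $\forr \in [1 - o(1), 1]$. Proving $\mathsf{AC}^0$ hardness for $\Dyes^\eta$ versus $\Dno^\eta$ therefore establishes hardness for a \emph{near}-extremal promise problem, not the extremal one. Your ``second step'' reduction from exact to noisy would need to show that an $\mathsf{AC}^0$ circuit solving the exact promise can be converted into one solving the noisy promise, but the natural direction goes the other way (noisy instances are a superset of exact instances only if you relax the promise). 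Closure of $\mathsf{AC}^0$ under composition with noise does not obviously help here, since the noise destroys the very promise you need. This step is not just ``the hardest''---as written it is not clear it can be made to work at all without an additional idea.
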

We suspect that proving this conjecture would require studying a richer class of bent functions. We think this question is also interesting from the perspective of classical complexity theory, as it may lead to new techniques to prove $\mathsf{AC}^0$ lower bounds.

\subsection{Our Ideas}
Before we go into our proof ideas, we first describe the existing approaches to proving Forrelation lower bounds and explain why they fail to work as $\eps$ tends to $1$. 

\subsubsection{\texorpdfstring{Why prior approaches don't work for $\eps=1$.}{Why prior approaches don't work for eps=1.}} To prove randomized lower bounds on the Forrelation problem, one needs to produce hard distributions, i.e., distributions on the \yes and \no instances of the problem that are classically hard to distinguish. For now, let us focus on generating \yes instances. It is not too difficult to generate a pair of \emph{real-valued} functions $f,g$ with large Forrelation and $\E[f^2]=\E[g^2]=1$; indeed, given any nonzero function $f$, we may define $g$ to be proportional to $\widehat{f}$, with an appropriate normalization factor so as to satisfy $\forr(f,g)=1$ and $\E[f^2]=\E[g^2]=1$. The difficulty is in producing a pair of $\pm 1$-valued functions with large Forrelation. Existing techniques to generate such pairs of functions follow essentially the same framework: 

\begin{enumerate}

 \item\label{gen:item1} For each $x\in \bin^n$, independently sample $\boldf(x)$ according to the Gaussian distribution with mean 0 and variance $\Theta(\eps)$. For each $y\in \bin^n$, define $\bg(y):=2^{n/2}\widehat{\boldf}(y)$. 
 
 \item\label{gen:item2}  For each $x,y\in\bin^n$, independently round each $\boldf(x),\bg(y)$ to $\pm 1$.  
 
 \end{enumerate} 
 
 It is not difficult to show that~\Cref{gen:item1} generates a distribution on pairs of real-valued functions that with high probability have large Forrelation, namely at least $\Theta(\eps)$. The goal of~\Cref{gen:item2} is to modify these functions to produce $\pm 1$-valued functions that continue to have large Forrelation, at least $\Theta(\eps)$. This is where the choice of the rounding function and the parameter $\eps$ become crucial. The best parameters to date are obtained in \cite{AA15}, where \Cref{gen:item1} is as described above with $\eps=1$ 
 and the rounding in \Cref{gen:item2} is done using the sign function, which maps non-negative reals to $1$ and negative reals to $-1$. Using properties of the Gaussian distribution,~\cite{AA15} show that the resulting Boolean functions (namely $\sign(\boldf),\sign(\bg)$) have Forrelation at least $2/\pi$ in expectation.\footnote{An intuition for the $2/\pi$ factor is as follows. A basic fact about the Gaussian distribution is that for nearly-uncorrelated Gaussians $\boldf(x)$ and $\bg(y)$, we have $\E[\sign(\boldf(x))\cdot \sign(\bg(y))]\approx \frac{2}{\pi}\E[\boldf(x)\cdot \bg(y)]$. Since the Forrelation function is a linear combination of terms of the form $\boldf(x)\cdot \bg(y)$, we get that $\E[\forr(\sign(\boldf),\sign(\bg))]\approx 2/\pi$.} 

One might wonder if there is a different rounding procedure that produces a distribution on inputs with Forrelation very close to 1, but we suspect that this is not the case. The reason is that this framework is actually fairly oblivious to the underlying unitary matrix. In particular, one can consider the problem of estimating the \emph{Rorrelation} function, 
\[
\rorr_U(f,g):=2^{-n} \sum_{x,y\in\bin^n}f(x)g(y)U_{x,y},
\] where $U$ is any $2^n\times 2^n$ unitary matrix; note that when $U=H^{\otimes n}$ this is identical to the Forrelation function. The Rorrelation problem was introduced and studied by~\cite{Tal20}. It turns out that the problem of estimating $\rorr_U(f,g)$ is classically hard for any unitary matrix whose entries are small (at most $1/2^{\Omega(n)}$ in magnitude), furthermore, this can be proved by using the above framework. In particular, if we sample a Haar random orthogonal matrix $\bU$, all of its entries will be small in magnitude and the framework described above will establish classical hardness of estimating $\rorr_{\bU}(f,g)$ up to a small global constant. On the other hand, the extremal version of the Rorrelation problem is vacuously easy for a Haar random orthogonal matrix $\bU$.
In more detail, in \Cref{lem:max_rorr} we prove the following:
\begin{claim} \label{claim:rorr}
With extremely high probability over a Haar random orthogonal matrix $\bU$, 
\begin{equation} \label{eq:max}\max_{f,g:\bin^n\to \{\pm 1\}}|\rorr_{\bU}(f,g)| < 0.99.\end{equation}
\end{claim}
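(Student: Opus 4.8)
The plan is to show that for a fixed pair $(f,g)$ the quantity $\rorr_{\bU}(f,g)$ concentrates tightly around $0$, and then union bound over all $4^{2^n}$ pairs. Writing $v_f := 2^{-n/2}\sum_x f(x)\ket{x}$ and $v_g := 2^{-n/2}\sum_y g(y)\ket{y}$, we have $\rorr_{\bU}(f,g) = \langle v_g, \bU v_f\rangle$ where $v_f,v_g$ are fixed unit vectors depending only on $f,g$. Since $\bU$ is Haar-random orthogonal, $\bU v_f$ is a uniformly random unit vector on the sphere $\mathbb{S}^{N-1}$ with $N = 2^n$ (independent of $v_f$), so $\langle v_g,\bU v_f\rangle$ is distributed as a single coordinate of a uniform random point on $\mathbb{S}^{N-1}$. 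The key step is the standard concentration fact that for a uniform unit vector $\bw \in \mathbb{S}^{N-1}$ and any fixed unit $u$, $\Pr[|\langle u,\bw\rangle| \ge t] \le 2e^{-N t^2/2}$ (this follows e.g.\ from the fact that $\langle u, \bw\rangle$ has density proportional to $(1-s^2)^{(N-3)/2}$, or from Levy's concentration on the sphere applied to the $1$-Lipschitz function $w \mapsto \langle u,w\rangle$).

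The main computation is then a union bound:
\begin{align}
\Prx_{\bU}\sbra{\max_{f,g}|\rorr_{\bU}(f,g)| \ge 0.99}
&\le \sum_{f,g} \Prx_{\bU}\sbra{|\langle v_g,\bU v_f\rangle| \ge 0.99} \nonumber \\
&\le 4^{2^n}\cdot 2\exp\pbra{-\frac{(0.99)^2}{2}\cdot 2^n}. \nonumber
\end{align}
Since $4^{2^n} = \exp(2^n \ln 4)$ and $\ln 4 \approx 1.386 < (0.99)^2/2 \cdot \ldots$ — wait, here I should be careful: $(0.99)^2/2 \approx 0.49$, which is smaller than $\ln 4 \approx 1.386$, so the bare union bound as stated does not close. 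The fix is to use the sharper sphere-concentration estimate in the form $\Pr[|\langle u,\bw\rangle|\ge t]\le (1-t^2)^{(N-1)/2} \le e^{-(N-1)t^2/2}$ is still not enough; instead I would replace the threshold $0.99$ by a constant $c<1$ chosen so that the exponent beats $\ln 4$, i.e.\ any $c$ with $c^2/2 > \ln 4$ — but no such $c \le 1$ exists either. So the right move is to observe that the union over $g$ is unnecessary: for \emph{fixed} $f$, $\max_g |\langle v_g, \bU v_f\rangle|$ is just $2^{-n/2}\sum_y |\pbra{\bU v_f}_y| = 2^{-n/2}\|\bU v_f\|_1$, which for a random unit vector is $\Theta(1)$ with exponentially good concentration below, say, $0.9\sqrt{2/\pi}<0.8$; then union-bounding only over the $2^{2^n}$ choices of $f$ against the deviation probability $\exp(-\Omega(2^n))$ for $\|\bU v_f\|_1$ concentration (again Levy, since $w\mapsto 2^{-n/2}\|w\|_1$ is $1$-Lipschitz on the sphere) does close, because the Lipschitz concentration gives $\exp(-\Omega(2^n))$ with a constant in the exponent that can be made larger than $\ln 2$ by taking the threshold a constant above the median $\approx \sqrt{2/\pi}$.

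Concretely, the steps I would carry out are: (i) reduce $\rorr_{\bU}(f,g)$ to $\langle v_g,\bU v_f\rangle$ with $v_f,v_g$ fixed unit vectors; (ii) for fixed $f$, bound $\max_g |\langle v_g,\bU v_f\rangle| = 2^{-n/2}\|\bU v_f\|_1$ and note $\bU v_f$ is Haar-uniform on $\mathbb{S}^{N-1}$; (iii) apply Levy's lemma to the $1$-Lipschitz function $\phi(w)= 2^{-n/2}\|w\|_1$ on $\mathbb{S}^{N-1}$ to get $\Pr[\phi(\bU v_f) \ge \mathbb{E}[\phi] + \delta] \le \exp(-\Omega(\delta^2 N))$, together with $\mathbb{E}[\phi] \le \sqrt{\mathbb{E}[\phi^2]} = \sqrt{2^{-n}\cdot N \cdot \mathbb{E}[w_1^2]} \cdot(\text{const}) \le \sqrt{2/\pi} + o(1) < 0.8$; (iv) union bound over the $2^{N}$ choices of $f$, which succeeds since $\Omega(\delta^2 N) \gg N\ln 2$ for $\delta$ a small constant with $0.8+\delta < 0.99$. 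The main obstacle, as the above false start illustrates, is getting the constants to line up: one must use the $\ell_1$-norm / maximum-over-$g$ reformulation (so that only $f$ is union-bounded) and invoke sharp spherical concentration rather than a crude Hoeffding-type bound, otherwise the $\ln 4$ (or even $\ln 2$) entropy term swamps the deviation exponent. A secondary point to handle carefully is the precise form of Levy's lemma / the concentration constant for the sphere $\mathbb{S}^{N-1}$ (the standard statement has $\exp(-(N-2)\delta^2/2)$ or similar), but this is routine once the reformulation is in place.
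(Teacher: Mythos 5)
Your step (iv) does not close; the comparison you assert there is backwards. With $\mathbb{E}[\phi]\approx\sqrt{2/\pi}\approx 0.80$ and target threshold $0.99$, the deviation is only $\delta\approx 0.19$, so L\'evy's lemma gives $\exp(-\Omega(\delta^2 N))$ with an exponent constant on the order of $\tfrac12\delta^2\approx 0.018$. You need this to beat the $2^N=\exp(N\ln 2)\approx\exp(0.69\,N)$ union over $f$, but $0.018\ll 0.69$: for a small constant $\delta$ one has $\delta^2 N\ll N\ln 2$, the \emph{opposite} of what you claim in step (iv). No reasonable constant in L\'evy's lemma (standard versions give around $1/2$, not $20$) rescues this. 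The underlying issue is that L\'evy concentration controls moderate deviations from the median, and is far too coarse for a threshold of $0.99$, which sits just below the Cauchy--Schwarz extreme $1$.

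Ironically, the ``false start'' you abandoned is the one that works, provided you keep the \emph{sharp} cap estimate rather than weakening it. You correctly wrote $\Pr[|\langle u,\bw\rangle|\ge t]\lesssim(1-t^2)^{(N-1)/2}$ for a Haar-random unit $\bw$, but then discarded it by passing to $e^{-(N-1)t^2/2}$ before comparing with the entropy. For $t=0.99$ the sharp base is $(1-t^2)^{1/2}=\sqrt{0.0199}\approx 0.141<1/4$, so a union over all $4^N$ pairs $(f,g)$ gives roughly $4^N\cdot(0.141)^{N-1}\approx(0.564)^N\to 0$, which suffices. The weakening $(1-t^2)^{1/2}\le e^{-t^2/2}$ loses exactly the gain you need near $t=1$: $-\tfrac12\ln(1-t^2)\approx 1.96$ versus $t^2/2\approx 0.49$, and $1.96>\ln 4>0.49$.

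For comparison, the paper's appendix proof is essentially this direct union bound, organized through your $\ell_1$ reformulation from step (ii). Fixing $f$, it decomposes $\{\|\bU v_f\|_1\ge 0.99\sqrt{N}\}$ as a union over $\alpha\in\{\pm1\}^N$ of caps $\{\langle\alpha/\sqrt{N},\bU v_f\rangle\ge 0.99\}$ --- which is precisely the union over $g$ --- bounds each cap's measure by $7^{-(N-1)}$ via a geometric argument (the cap fits inside a ball of diameter $2\sqrt{1-0.99^2}\le 2/7$, whose surface area dominates that of the convex body it encloses), and then unions over the $2^N$ sign patterns $\alpha$ and the $2^N$ choices of $f$. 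So the reformulation in your step (ii) is the right one; the missing ingredient is replacing L\'evy concentration with a sharp spherical cap bound, either the density tail $(1-t^2)^{\Theta(N)}$ or the paper's containing-ball argument.
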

In other words, for a typical Haar random orthogonal matrix $\bU$, there are \emph{no} $\pm 1$-valued functions for which $\rorr_{\bU}(f,g)=\pm 1$ and thus, the extremal version of the Rorrelation problem becomes vacuously easy. 
There is something special about the Forrelation problem and the Hadamard matrix which makes it possible for there to exist even one pair of $\pm 1$-valued functions $(f,g)$ such that $|\forr(f,g)|=1$. 

It turns out that these extremal instances of the Forrelation problem arise from an important class of Boolean functions known as \emph{bent} functions; we now describe this connection.

\subsubsection{Connections to Bent Functions}
\label{sec:bent_functions}

Let us try to generate Boolean functions $f,g:\bin^n\to \{\pm 1\}$ with as large Forrelation value as possible. To do this, let us first revisit the argument for why $\forr(f,g)$ cannot be larger than $1$. Recall that 
\[\forr(f,g)\triangleq 2^{-n/2}\sum_{y\in \bin^n}\widehat{f}(y)g(y)  \le 2^{-n/2}\cdot \sqrt{\sum_y \widehat{f}(y)^2}\cdot \sqrt{\sum_y  g(y)^2}=1,
\]
where the inequality is by Cauchy-Schwarz. For this to be tight, we need to set each $g(y)$ to be a multiple of $\widehat{f}(y)$. Due to the normalization factor, it turns out that we need to set $g(y)=2^{n/2}\cdot \widehat{f}(y)$. In particular, since $g(y)$ is $\pm 1$, it implies that each Fourier coefficient $\widehat{f}(y)$ is $\pm 2^{-n/2}$. Boolean functions with this property, namely that all of the Fourier coefficients have equal magnitude, are known as \emph{bent} Boolean functions. These are precisely the functions that give rise to extremal instances of the Forrelation problem. In more detail, as shown above any pair of functions $(f,g)$ whose Forrelation value is 1 must arise from a bent function $f$; conversely, it is easy to see that any bent function $f$ gives rise to a Boolean function $g$ (namely $g(y):=2^{n/2}\cdot \widehat{f}(y)$) such that $(f,g)$ has Forrelation 1. 

Bent functions are extensively studied in Boolean function analysis. 
One important class of bent functions is the Maiorana-McFarland class of bent functions (see e.g.~Section~6.1 of \cite{CM16}). This family consists of all $n$-bit Boolean functions $(-1)^f$ where $f:\bin^n\to \bin$ is defined at $x\in \bin^n$ by $f(x):=\abra{ x_\up,x_\dn} + h(x_\dn)$ where $n$ is even, $x_\up$ and $x_\dn$ denote the first and second halves of $x$, and $h:\bin^{n/2}\to \bin$ is any Boolean function. Here, $\langle x_\up, x_\dn\rangle$ is the inner product function (mod 2). It is not too difficult to show that every function of this form is bent (see \cite{Dillon74}; a proof of this is implicit in the proof of our \Cref{lemma:hard_distributions}); intuitively, the inner product function itself is a bent function and adding any Boolean function that only depends on the second half of $x$ preserves the bent-ness. We will use the Maiorana-McFarland family to construct hard instances for our lower bound.

\paragraph*{Related Works.} There have been a few works~\cite{Martin2010,CKOR13,ABOS25} that observe the close connections between bent functions and the hidden shift problem (Simon's problem). Independently of our work,~\cite{Suman24} observed the connection between bent functions and extremal instances of the Forrelation problem~\footnote{We thank anonymous reviewers for TQC 2025 for pointing this out.}.

A related problem is property testing for the class of bent Boolean functions. Here, we are given query access to a Boolean function $f$ and the objective is to tell whether $f$ is bent or $\Omega(1)$-far from every bent function. This problem was studied by~\cite{BSK14,GWX13}. In particular,~\cite{GWX13} established an $\Omega(2^{n/4})$ lower bound on the bounded-error randomized query complexity of this problem and they used the Maiorana-McFarland family to construct hard distributions. In our problem, we are given truth table access to $f$ and $g$ that are promised to be bent and that satisfy either $g= \widehat{f}\cdot 2^{n/2}$ or $g=- \widehat{f}\cdot 2^{n/2}$ and we wish to tell these apart. As both $f$ and $g$ are promised to be bent and there is a high degree of correlation between $f$ and $g$, this significantly complicates the problem and we need to introduce some new ideas in our proof.

\subsubsection{Proof Sketch}

The high-level structure of our argument is reminiscent of the lower bound argument of~\cite{GWX13} (see Section 4 of~\cite{GWX13}). For ease of notation, let $\chi:\bin\to \{\pm 1\}$ be the function mapping $x\to (-1)^x$. For a Boolean-valued function $f:\bin^n\to \bin$,  let $\chi_f$ denote the function $(-1)^f$. 

To explain the ideas underlying our proof, we instead consider the problem of distinguishing $\forr(f,g)=1$ from $\forr(f,g)=1/\sqrt{2^n}$. It is simpler to describe a pair of hard distributions for this variant of the problem, so we focus on it in the current sketch; however, the main section of our paper directly tackles the case of $\forr(f,g)=1$ versus $\forr(f,g)=-1$. For the distributions we work with in the proof overview (\Cref{eq:proof_1_no,eq:proof_1_yes}), it is easier to pinpoint exactly where the classical hardness arises from, whereas the actual distributions we work with (\Cref{def:hard_distributions}) have more complicated terms. 

We will now define a distribution over pairs of functions $(\chi_f,\chi_g)$ where $f,g:\bin^n\to \bin$ are Boolean functions. For the \yes distribution, we will sample $\chi_{\boldf}$ to be a bent function.  As described before, this determines a unique $\pm 1$-valued function $\chi_{\bg}$ (namely, $\chi_{\bg}:=2^{n/2}\widehat{\chi_{\boldf}}$) such that $\forr(\chi_{\boldf},\chi_{\bg})=1$.\footnote{Throughout the paper we use {\bf bold font} to indicate random variables; note that in the current context both $\boldf$ and $\bg$ are random variables, albeit (completely) correlated ones.} The process to generate $\boldf$ consists of two steps:
\begin{enumerate}
    \item\label{step1} \textbf{Maiorana-McFarland family:} First, sample a random bent function from the Maiorana-McFarland family. We use this step to ensure that we only produce bent functions -- this is essential to construct instances with $\forr(\chi_f,\chi_g)=1$. 
    \item\label{step2} \textbf{Linear Transformation:} We then apply an invertible linear transformation on the input variables; doing this simply permutes the Fourier coefficients, and hence preserves bent-ness. We will show this step sufficiently masks the input and ``hides'' the inner-product structure, which effectively prevents classical algorithms from detecting structural properties by reading just a few coordinates. 
\end{enumerate}  
In contrast, for the \no distribution, we will sample both $f,g$ to be highly non-bent functions that mimic the Maiorana-McFarland family of bent functions. We now give a more formal description of this process. Sample $\bA\in\bin^{n\times n}$ to be a random invertible matrix, $\bh:\bin^{n/2}\to \bin$ to be a uniformly random function, and let $\bB=(\bA^T)^{-1}$. For the \yes distribution, define 
\begin{equation} \label{eq:proof_1_yes} \boldf(x):=\abra{\bA_\up x,\bA_\dn x}+\bh(\bA_\dn x)\quad \text{and } \quad \bg(y):=\abra{\bB_\up y,\bB_\dn y}+\bh(\bB_\up y) ,\end{equation}
where $M_\up,M_\dn$ denote the upper and lower halves of a matrix $M$. For the \no distribution,
\begin{equation} \label{eq:proof_1_no} \boldf(x):=\bh(\bA_\dn x)\quad \text{and } \quad \bg(y):=\bh(\bB_\up y) ,\end{equation}
We then let $(\chi_{\boldf},\chi_{\bg})$ be the inputs to the Forrelation problem. It is not too difficult to show that $\boldf,\bg$ in~\Cref{eq:proof_1_yes} satisfy $\forr(\chi_{\boldf},\chi_{\bg})=1$ and similarly, in~\Cref{eq:proof_1_no} satisfy $\forr(\chi_{\boldf},\chi_{\bg})=1/\sqrt{2^n}$. The proof of this involves analyzing the Fourier transform of the Maiorana-McFarland family and applying a suitable change of variables; a more general version of this statement is proved in~\Cref{lemma:hard_distributions}.

\paragraph*{Classical Hardness.} We will now give some intuition as to why these distributions are hard to distinguish by classical algorithms. Consider a randomized decision tree of depth $\ell$ that distinguishes these distributions with large advantage. By Yao's minmax principle, fixing the randomness of this algorithm, there exists a deterministic decision tree with the same properties. For now, let us assume that the tree is non-adaptive, that is, it fixes a set of $\ell$ points and this same set is queried along every root-to-leaf path. (This assumption turns out to be not so important, and the analysis for general adaptive algorithms closely follows the non-adaptive case.)
Let $x_1,\ldots,x_k\in\bin^n$ and $y_1,\ldots,y_{\ell-k}\in\bin^n$ be the points queried by the non-adaptive decision tree. Here, $x_i$ and $y_j$ indicate points in the truth table of $\boldf$ and $\bg$ respectively and $k\in \{0,\ldots,\ell\}$; in other words, the algorithm just queries $\boldf(x_1),\ldots,\boldf(x_k)$ and $\bg(y_1),\ldots, \bg(y_{\ell-k})$. We can assume without loss of generality that the $x_i$ are distinct and similarly the $y_j$ are distinct (there may be collisions between $x_i$ and $y_j$). We will assume for the proof sketch that none of $x_i,y_j$ are zero, since such queries are useless in distinguishing the \yes and \no distributions (as the outcomes are the same for both distributions).

Let us look at the contribution of $\bh(\circ)$ to each of $\boldf(x_1),\ldots,\boldf(x_k)$ and $\bg(y_1),\ldots, \bg(y_{\ell-k})$. Recalling the definition of the \yes and \no distribution in~\Cref{eq:proof_1_yes,eq:proof_1_no}, it is easy to see that for both these distributions, the contribution of $\bh(\circ)$ is given by
\begin{equation}\label{eq:sequence_h} \bh(\bA_\dn x_1),\ldots,\bh(\bA_\dn x_k)\quad \text{and}\quad \bh(\bB_\up y_1),\ldots,\bh(\bB_\up y_{\ell-k}
).\end{equation}
In particular, the sequence of points on which $\bh(\circ)$ is implicitly queried is 
\begin{equation}\label{eq:sequence} \bA_\dn x_1,\ldots,\bA_\dn x_k\quad \text{and}\quad \bB_\up y_1,\ldots,\bB_\up y_{\ell-k}.\end{equation}
The main technical lemma of our work shows that if $\ell\le 2^{cn}$, for a suitable absolute constant $c>0$, then with high probability~\Cref{eq:sequence} is a sequence of $\ell$ distinct points. (This is not necessarily the case if some of $x_i,y_j$ are allowed to be zero, but as we argued before, such queries are useless and we can assume without loss of generality that $x_i,y_j\neq 0$.) A version of this is formalized in~\Cref{lem:main_lemma} and~\Cref{lem:main_lemma_2}. Whenever this is the case, the sequence of outcomes in~\Cref{eq:sequence_h} consists of uniform and independent random bits -- indeed, $\bh$ is a uniformly random function and hence its evaluations on distinct points are independent and uniform. This happens for both the \yes and the \no distributions and as a result, any decision tree of depth $\ell\le 2^{cn}$ cannot sufficiently distinguish these distributions. We now sketch the proof that~\Cref{eq:sequence} is a distinct sequence with high probability. 

\paragraph*{Collision Probability Analysis.} 
This is done in~\Cref{sec:main_lemma}. We will analyze the probability that any two points in~\Cref{eq:sequence} are equal and show that it is at most $2^{-\Omega(n)}$. We then apply a union bound over all pairs of points (there are at most $\ell^2$ pairs) to conclude that with high probability, at least $1-\ell^2\cdot 2^{-\Omega(n)}\ge 2/3$, the sequence of points in~\Cref{eq:sequence} is distinct. This along with the above paragraph would complete the proof. 

Collisions within the $\bA_\dn x_i$  are significantly easier to analyze; the matrix $\bA$ is distributed according to a uniformly random invertible matrix, and hence intuitively the lower half $\bA_\dn$ has enough entropy to make collisions of the form $\bA_\dn x_i=\bA_\dn x'_i$ highly unlikely. Collisions within the $\bB_\up y_j$ can be similarly controlled. 
 We prove this in~\Cref{lem:main_lemma_2}. 
Collisions between $\bA_\dn x_i$ and $\bB_\up y_j$ are significantly harder to analyze, since $\bA_\dn$ and $\bB_\up$ are correlated with each other due to the relationship $\bB=(\bA^T)^{-1}$, but this can nevertheless be shown.

\paragraph*{The General Case: Forrelation $1$ versus $-1$.} We now describe the additional ideas required to prove the classical hardness of the \maxforrelationproblem. In order to generate instances with $\forr(f,g)=1$ and $\forr(f,g)=-1$, we are forced to sample $f$ to be a bent function and we are forced to set $g$ so that either $\chi_g=+\widehat{\chi_f}$ or $\chi_g=-\widehat{\chi_f}$. If we were to naively modify the aforementioned \yes distribution to a \no distribution by letting $\chi_g=-\widehat{\chi_f}$, then the resulting \yes and \no distributions would become easy to classically distinguish: querying the \yes distribution on $x=0$ and $y=0$ would produce identical answers, namely $(-1)^{\bh(0)}$ and $(-1)^{\bh(0)}$ while querying the \no distribution on these points would produce different answers, namely $(-1)^{\bh(0)}$ and $-(-1)^{\bh(0)}$. In order to get around this issue, in~\Cref{step2} of our actual hard distributions we need to apply an \emph{affine} transformation on the input variables, instead of a \emph{linear} transformation. This has the effect of shifting the origin, which intuitively means that the classical algorithm ``does not know'' which point to query in order to see this correlated pair of answers. Remarkably, applying this random shift also  simplifies the collision probability analysis in the case of collisions between $\bA_\dn x_i$ and $\bB_\up y_j$. For more details, see~\Cref{def:hard_distributions}.  

\subsubsection{Organization}
We describe our hard distributions in~\Cref{sec:hard-distributions}. We prove that these are indeed valid distributions (\Cref{lemma:hard_distributions}) and prove some some key properties about them (\Cref{corollary:marginal_B_1,corollary:marginal_B_1_uniform}). In~\Cref{sec:classical-lower-bound}, we present the main technical lemma (\Cref{lem:main_lemma}) and prove the main theorem assuming this in~\Cref{sec:main_theorem}. In~\Cref{sec:main_lemma}, we prove the main lemma.


\section{Hard Distributions for the \maxforrelationproblem}
\label{sec:hard-distributions}

As described in the introduction, the hard instances of our problem will be based on the Maiorana-McFarland family (see e.g.~Section~6.1 of \cite{CM16}) of bent functions -- this family consists of ``inner-product-like'' functions. To sample our hard instances, we will first sample an affine shift, followed by a random ``inner-product-like'' function under this affine shift. We describe this in more detail below.

\subsection{Descriptions of Hard Distributions}

\paragraph{Notation.} For simplicity of notation, define the function $\chi:\bin\to \{\pm 1\}$ mapping $x$ to $(-1)^{x}$. For any Boolean-valued function $f:\bin^n\to \bin$, let $\chi_f:\bin^n\to \{\pm 1\}$ denote the function $\chi(f)(x)=(-1)^{f(x)}$. For matrices $A,B\in \bin^{n\times n}$, let $A=\begin{bmatrix}A_\up \\A_\dn \end{bmatrix}$ and $B=\begin{bmatrix}B_\up \\B_\dn \end{bmatrix}$ where $A_\up, A_\dn$ and $B_\up, B_\dn $ are $n/2\times n$ matrices representing the upper and lower halves of $A$ and $B$ respectively. We use $\langle x,y\rangle$ to denote the inner product (mod 2) between vectors $x,y\in \bin^n$. 

\paragraph*{}
We now proceed to the description of the hard distributions. We first define a joint distribution on 4-tuples $(A,B,a,b)$ where $A,B$ are matrices and $a,b$ are affine shifts (vectors) as follows. 

\begin{definition} \label{def:hard_matrices} Let $\bA\sim \bin^{n\times n}$ be a uniformly random matrix of full rank and let $\bB:=(\bA^T)^{-1}$.  Let $\ba\sim\bin^n$ be a uniformly vector and let $\bb=\bB^T\begin{bmatrix} \bB_\dn \\ \bB_\up\end{bmatrix}\ba$.
\end{definition} 

Let $\cL$ be the induced distribution on $(\bA,\bB)$. We use $\cL_{\dn\up}$,  $\cL_\dn$ and $\cL_\up$ to denote the induced distribution on 
$(\bA_\dn ,\bB_\up)$, $\bA_\dn$ and $\bB_\up$ respectively. We now define the hard instances of the \maxforrelationproblem ~for a family of functions $\cH$.  

\begin{definition}[Hard Instances of \maxforrelationproblem] \label{def:hard_distributions} 
Let $\cH$ be any collection of boolean functions mapping $\bin^{n/2}$ to $\{0,1\}$. Sample $(\bA,\bB,\ba,\bb)$ as in~\Cref{def:hard_matrices}. Sample $\bh:\bin^{n/2}\to \{0,1\}$ to be a uniformly random Boolean function in $\cH$. Define Boolean functions $\boldf,\bg:\bin^{n}\to\bin$ as follows.  
\begin{align*} 
\begin{split}
\boldf(x)&:= \abra{ \bA_\up x, \bA_\dn x}  + \abra{x,\ba} +\bh(\bA_\dn  x ) \\ \quad \bg(y)&:=\abra{ \bB_\up  y, \bB_\dn  y }  + \abra{y,\bb}+ \bh(\bB_\up  y+ \bB_\up \ba) +\abra{\bB_\up \ba,\bB_\dn \ba}.\end{split}
\end{align*}
Let $\muyes$ and $\muno$ be the induced distributions on $(\chi_{\boldf},\chi_{\bg})$ and $(\chi_{\boldf},-\chi_{\bg})$ respectively for $h\sim \cH$.
\end{definition}

We will later instantiate $\cH$ in various ways. The following lemma shows that regardless of $\cH$, these are indeed valid distributions, that is, $\muyes$ and $\muno$ are indeed supported on the \textsc{yes} and \textsc{no} instances of the \maxforrelationproblem.

\begin{lemma}\label{lemma:hard_distributions}
    For $(\chi_f,\chi_g)$ in the support of $\muyes$ and $\muno$, we have $\forr(f,g)=1$ and  $\forr(f,g)=-1$ respectively (regardless of $\cH$).
\end{lemma}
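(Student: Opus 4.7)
\medskip

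\noindent\textbf{Proof plan.} The key reduction is to show that for $(\chi_{\boldf},\chi_{\bg})$ sampled from $\muyes$ one has the pointwise identity $\widehat{\chi_{\boldf}}(y)=2^{-n/2}\chi_{\bg}(y)$ for every $y\in\bin^n$. Once this is established, Parseval gives
\[
\forr(\chi_{\boldf},\chi_{\bg})=2^{-n/2}\sum_{y}\widehat{\chi_{\boldf}}(y)\chi_{\bg}(y)=2^{-n}\sum_{y}\chi_{\bg}(y)^2=1,
\]
and the \no case follows immediately since flipping the sign of $\chi_{\bg}$ flips the sign of $\forr$.

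\medskip

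\noindent\textbf{Computing $\widehat{\chi_{\boldf}}$ via a change of variables.} Since $A$ is invertible over $\bin$, the map $x\mapsto Ax$ is a bijection on $\bin^n$. Write $Ax=\begin{bmatrix}u\\ v\end{bmatrix}$ so that $u=A_\up x$ and $v=A_\dn x$; then $x=A^{-1}\begin{bmatrix}u\\ v\end{bmatrix}$. Using $B=(A^T)^{-1}=(A^{-1})^T$, for any $z\in\bin^n$ we get the clean identity
\[
\langle x,z\rangle = \Bigl\langle \begin{bmatrix}u\\ v\end{bmatrix},\,Bz\Bigr\rangle = \langle u,B_\up z\rangle+\langle v,B_\dn z\rangle.
\]
Applying this to $z=a$ and $z=y$ and plugging into the definition of $f$, the Fourier transform becomes
\[
\widehat{\chi_{\boldf}}(y)=\frac{1}{2^n}\sum_{u,v}(-1)^{\langle u,v\rangle+\langle u,B_\up a\rangle+\langle v,B_\dn a\rangle+h(v)+\langle u,B_\up y\rangle+\langle v,B_\dn y\rangle}.
\]
The variable $u$ appears only linearly, so summing over $u$ first yields $2^{n/2}$ if $v=B_\up(y+a)$ and zero otherwise. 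Substituting this unique $v$ collapses the expression to
\[
\widehat{\chi_{\boldf}}(y)=2^{-n/2}(-1)^{h(B_\up y+B_\up a)+\langle B_\up(y+a),B_\dn a\rangle+\langle B_\up(y+a),B_\dn y\rangle}.
\]

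\medskip

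\noindent\textbf{Matching the exponent with $g(y)$.} Expanding the two inner products in the exponent gives four terms: $\langle B_\up y,B_\dn y\rangle$, $\langle B_\up a,B_\dn a\rangle$, $\langle B_\up y,B_\dn a\rangle$, and $\langle B_\up a,B_\dn y\rangle$. The first two are exactly the $\langle B_\up y,B_\dn y\rangle$ and $\langle B_\up a,B_\dn a\rangle$ appearing in $g$. The cross terms combine to $y^{T}\bigl(B_\up^{T}B_\dn+B_\dn^{T}B_\up\bigr)a$. But by the definition of $\bb$ in \Cref{def:hard_matrices},
\[
b=B^{T}\begin{bmatrix}B_\dn\\ B_\up\end{bmatrix}a=\bigl(B_\up^{T}B_\dn+B_\dn^{T}B_\up\bigr)a,
\]
so the cross terms equal $\langle y,b\rangle$. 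Therefore the exponent of $\widehat{\chi_{\boldf}}(y)\cdot 2^{n/2}$ equals $g(y)\bmod 2$, giving $\widehat{\chi_{\boldf}}(y)=2^{-n/2}\chi_{\bg}(y)$ as required.

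\medskip

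\noindent\textbf{Where to be careful.} The only nontrivial step is the algebra that forces the choice of $b$: the whole point of defining $\bb=(B_\up^{T}B_\dn+B_\dn^{T}B_\up)\ba$ is to absorb precisely the two cross terms produced when the affine shift $\ba$ is pushed through the inner-product-like form under the change of variables $x\mapsto Ax$. I expect this bookkeeping step to be the most error-prone part; everything else is a routine application of the formula $\sum_{u}(-1)^{\langle u,w\rangle}=2^{n/2}\mathbf{1}[w=0]$ and the identity $\chi_{\bg}^2\equiv 1$. The \no statement is immediate, since replacing $\chi_{\bg}$ with $-\chi_{\bg}$ negates every summand in \eqref{eq:def_forr}.
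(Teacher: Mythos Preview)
Your proposal is correct and follows essentially the same approach as the paper: both carry out the change of variables $x\mapsto Ax$, sum over the ``upper half'' variable to produce the constraint $v=B_\up(y+a)$, and then identify the cross terms with $\langle y,b\rangle$ using the definition of $b$. The only differences are cosmetic (you name the transformed halves $u,v$ while the paper reuses $x_\up,x_\dn$), and your exposition is arguably a bit crisper in isolating why $b$ is defined the way it is.
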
 
\begin{proof}[Proof of~\Cref{lemma:hard_distributions}] 
Fix any $A,B,a,b$ as in~\Cref{def:hard_distributions} and let $f,g$ be as specified in \Cref{def:hard_distributions}. For any $y\in\bin^n$ , consider
\begin{align*}\widehat{\chi_f}(y)
&=\frac{1}{2^{n}} \sum_{z\in\bin^n} \chi_f(z)\cdot \minus{\abra{z,y}} \\
&= \frac{1}{2^{n}} \sum_{z\in\bin^n} \minus{f(z)+\abra{z,y}} 
\\&=\frac{1}{2^{n}} \sum_{z\in\bin^n}  \minus{\abra{ A_\up z ,A_\dn z }+\abra{z,a} +h(A_\dn z)+\abra{z,y}}.
\end{align*}
Since $A$ is invertible we can do a change of variables $x\leftarrow Az$, which in particular gives us  $x_\up\leftarrow A_\up z $ and $x_\dn\leftarrow A_\dn  z$. This lets us rewrite the above as 
\begin{align*}
\widehat{\chi_f}(y)=\frac{1}{2^{n}} \sum_{x\in\bin^n}  \minus{\abra{  x_\up,x_\dn} +\abra{A^{-1}x,a}+h(x_\dn)+\abra{ A^{-1}x,y}}.
\end{align*}
We now observe that $\abra{A^{-1}x,\circ}=\abra{x,B\circ}$, since $(A^{-1})^T=B$. Substituting this above, we see that
\begin{align*}
\widehat{\chi_f}(y)=\frac{1}{2^{n}} \sum_{x\in\bin^n}  \minus{\abra{  x_\up,x_\dn} +\abra{x,Ba}+h(x_\dn)+\abra{x,By}}.
\end{align*}
We now express $\abra{x,B\circ}$ as $\abra{x_\up,B_\up \circ}+\abra{x_\dn,B_\dn \circ}$. Thus,
\begin{align*}
\widehat{\chi_f}(y)=\frac{1}{2^{n}} \sum_{x\in\bin^n}  \minus{\abra{  x_\up,x_\dn} +\abra{x_\up,B_\up a}+\abra{x_\dn,B_\dn a}+h(x_\dn)+\abra{x_\up,B_\up y}+\abra{x_\dn,B_\dn y}}.
\end{align*}
We now group the terms based on $x_\up$. 
\begin{align*}
\widehat{\chi_f}(y)&=\frac{1}{2^{n}} \sum_{x\in\bin^n}  \minus{\abra{  x_\up,x_\dn+B_\up a +B_\up y} +\abra{x_\dn,B_\dn a+B_\dn y}+h(x_\dn)}\\
&=\frac{1}{2^{n}} \sum_{x_\dn\in \bin^{n/2}}\sum_{x_\up\in\bin^{n/2}}  \minus{\abra{  x_\up,x_\dn+B_\up a +B_\up y}}\cdot \minus{\abra{x_\dn,B_\dn a+B_\dn y}+h(x_\dn)}.
\end{align*}
Observe that the first term $\minus{\abra{  x_\up,x_\dn+B_\up a +B_\up y}}$ when summed over all $x_\up\in \bin^{n/2}$ is non-zero only if $x_\dn+B_\up a +B_\up y=0$ (equivalently, $x_\dn=B_\up a +B_\up y$); and if it is non-zero, it equals $2^{n/2}$. Thus, 
we have
\begin{align*}
\widehat{\chi_f}(y)&= \frac{1}{2^{n/2}}\minus{\abra{B_\up a +B_\up y,B_\dn a+B_\dn y}+h(B_\up a +B_\up y)}.
\end{align*}
We now expand the terms in the R.H.S. to obtain
\begin{align*}
   \widehat{\chi_f}(y)
    &=\frac{1}{2^{n/2}}\minus{\abra{B_\up y,B_\dn  y}+ \abra{\begin{bmatrix}B_\up  \\ B_\dn \end{bmatrix}y,\begin{bmatrix}B_\dn  \\ B_\up \end{bmatrix}a}+h(B_\up a +B_\up y)+ \abra{B_\up a,B_\dn a}}\\
     &= \frac{1}{2^{n/2}}  \minus{\abra{B_\up y,B_\dn  y}+ \abra{y,\begin{bmatrix}B_\up^T &  B_\dn^T \end{bmatrix}\begin{bmatrix}B_\dn  \\ B_\up \end{bmatrix}a}+h(B_\up a +B_\up y)+ \abra{B_\up a,B_\dn a}}.
\end{align*}
Recall that 
we have $b=B^T\begin{bmatrix}B_\dn  \\ B_\up \end{bmatrix}a=\begin{bmatrix}B_\up^T &  B_\dn^T \end{bmatrix}\begin{bmatrix}B_\dn  \\ B_\up \end{bmatrix}a$. Substituting this in the above equation, we have  
\begin{align*}\widehat{\chi_f}(y)&=\frac{1}{2^{n/2}} \minus{\abra{B_\up y,B_\dn y}+\abra{y,b}+h(B_\up a +B_\up y)}+\abra{B_\up a,B_\dn a}.
\end{align*}
Recalling the definition of $g(y)$ from~\Cref{def:hard_distributions}, we see that
\begin{align}
\widehat{\chi_f}(y)&\triangleq \frac{1}{2^{n/2}}\chi_g(y).\label{eq:proof_fourier_transform}
\end{align}
We now use the defining equation for the Forrelation function (\Cref{eq:def_forr}) to get \[\forr(\chi_f,\chi_g)\triangleq \frac{1}{2^{n/2}}\sum_{y\in\bin^n}\widehat{\chi_f}(y)\chi_g(y).\] 
Combining this and~\Cref{eq:proof_fourier_transform}, we see that $\forr(\chi_f,\chi_g)=\frac{1}{2^{n}}\sum_y \chi_g(y)^2=1$. It can be similarly shown that $\forr(\chi_f,-\chi_g)=\frac{1}{2^{n}}\sum_y (-1)\cdot \chi_g(y)^2=-1$. This completes the proof of~\Cref{lemma:hard_distributions}.
\end{proof}

\subsection{Characterizing the Marginals of the Distributions}
Recall that we used $\cL$ to denote the induced distribution on $(\bA,\bB)$, and $\cL_\dn$ and $\cL_\up$ to denote the induced distribution on 
$\bA_\dn$ and $\bB_\up$ respectively. In this section, we will characterize the marginal distributions $\cL_\up$ and $\cL_\dn$, which turn out to be identical, as follows:

\begin{lemma}
\label{corollary:marginal_B_1}
Each of the distributions $\cL_{\up}$ and $\cL_\dn$ is precisely the uniform distribution over all matrices in $\bin^{n/2\times n}$  that have full row-rank. \end{lemma}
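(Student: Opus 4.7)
The plan is to reduce both claims to a single symmetry fact about $GL_n(\F_2)$: if $\bM$ is uniform on $GL_n(\F_2)$, then any fixed $n/2$ consecutive rows of $\bM$ are uniformly distributed over the set of full row-rank $n/2 \times n$ matrices. Applied to $\bA$, this gives $\cL_\dn$ directly. For $\cL_\up$, I would combine this with the observation that the map $A \mapsto (A^T)^{-1}$ is a bijection on $GL_n(\F_2)$, so $\bB = (\bA^T)^{-1}$ is also uniform on $GL_n(\F_2)$ and its top $n/2$ rows obey the same marginal.

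For the core symmetry fact, I would use a counting argument. Fix any full row-rank matrix $M \in \F_2^{n/2 \times n}$. The number of ways to extend $M$ to an invertible $n \times n$ matrix by prepending $n/2$ additional rows on top equals the number of $n/2$-tuples of vectors $(v_1,\ldots,v_{n/2})$ in $\F_2^n$ such that $v_1,\ldots,v_{n/2}$ together with the rows of $M$ form a basis of $\F_2^n$. Since the row span of $M$ already has dimension $n/2$, this count is exactly
\[
\prod_{i=0}^{n/2 - 1}\bigl(2^n - 2^{n/2 + i}\bigr),
\]
which depends only on the rank of $M$, not on $M$ itself. Therefore $\Pr[\bA_\dn = M]$ is constant as $M$ ranges over full row-rank $n/2 \times n$ matrices, and vanishes otherwise, proving the claim for $\cL_\dn$.

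For $\cL_\up$, I would verify that $\phi(A) := (A^T)^{-1}$ is a bijection from $GL_n(\F_2)$ to itself (it is a composition of transpose and inversion, both bijections of $GL_n(\F_2)$). Hence $\bB = \phi(\bA)$ is uniformly distributed on $GL_n(\F_2)$, and the same counting argument applied to its top $n/2$ rows shows that $\bB_\up$ is uniform over full row-rank $n/2 \times n$ matrices. I do not expect a genuine obstacle here: the statement is purely about marginals, so no correlation between $\bA_\dn$ and $\bB_\up$ enters the argument, and the whole proof amounts to a standard orbit-counting observation on $GL_n(\F_2)$. The more delicate joint distribution $\cL_{\dn\up}$ is exactly what will need separate care later in the collision probability analysis, but not for the present lemma.
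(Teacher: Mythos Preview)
Your proposal is correct and follows essentially the same approach as the paper: both fix a full row-rank $n/2 \times n$ block and count the number of ways to complete it to an element of $GL_n(\F_2)$, obtaining the same product $\prod_{i=0}^{n/2-1}(2^n - 2^{n/2+i})$ independent of the chosen block. The only cosmetic difference is that you make explicit the bijection $A \mapsto (A^T)^{-1}$ on $GL_n(\F_2)$ to justify that $\bB$ is uniform, whereas the paper simply asserts this and then runs the identical count.
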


We then show the following fact about the row-rank of a uniformly random rectangular matrix.\begin{fact}\label{fact:full_row_rank} A uniformly random matrix in $\bin^{n/2\times n}$ has full row-rank with probability at least $1-(n/2)2^{-n/2}$. \end{fact}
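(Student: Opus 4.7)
The plan is to sample the $n/2$ rows of the matrix one at a time and union-bound over the event that some row lands in the linear span of the rows sampled before it. Concretely, I would let $\br_1, \ldots, \br_{n/2}$ denote the rows (which are i.i.d.\ uniform over $\bin^n$) and observe that the matrix has full row-rank if and only if, for each $i \in \cbra{1,\ldots,n/2}$, the row $\br_i$ lies outside $\mathrm{span}(\br_1, \ldots, \br_{i-1})$ (with the convention that the span of the empty set is $\cbra{0}$).

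For the key probabilistic estimate, I would condition on any realization of $\br_1, \ldots, \br_{i-1}$ and note that their $\bin$-linear span is a subspace of $\bin^n$ of size at most $2^{i-1}$. Since $\br_i$ is independent of the earlier rows and uniformly distributed over $\bin^n$, the conditional probability that $\br_i$ falls inside this span is at most $2^{i-1}/2^n$. For $i \le n/2$ this quantity is bounded by $2^{n/2-1}/2^n = 2^{-n/2-1}$, so a union bound over the $n/2$ rows yields a failure probability of at most $(n/2) \cdot 2^{-n/2-1} \le (n/2) \cdot 2^{-n/2}$, which is the desired bound.

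I do not anticipate any real technical obstacle: the argument is just a rewriting of the standard counting of full-rank matrices over $\bin$ (namely that there are $\prod_{i=0}^{n/2-1}(2^n - 2^i)$ of them out of $2^{n \cdot n/2}$ total) via a union bound, written this way so as to directly yield the $(n/2) \cdot 2^{-n/2}$ factor in the statement rather than the slightly sharper $2^{-n/2}$ bound one gets by evaluating the product exactly.
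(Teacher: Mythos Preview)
Your proposal is correct and matches the paper's approach: the paper also counts full-rank matrices as $\prod_{i=1}^{n/2}(2^n-2^{i-1})$, divides by $2^{n^2/2}$, and bounds the resulting product by $(1-2^{-n/2})^{n/2}\ge 1-(n/2)2^{-n/2}$, which is exactly the union-bound version you describe.
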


As an immediate corollary of this and~\Cref{corollary:marginal_B_1}, we obtain the following.
\begin{corollary}
\label{corollary:marginal_B_1_uniform}
Each of the distributions $\cL_{\up}$ and $\cL_{\dn}$ is $(n/2)\cdot 2^{-n/2}-$close to the uniform distribution over $\bin^{n/2\times n}$ in total variational distance. \end{corollary}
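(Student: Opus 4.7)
The plan is to invoke \Cref{corollary:marginal_B_1} and \Cref{fact:full_row_rank} together with the elementary observation that when one uniform distribution is supported on a subset of another, the total variation distance between them equals one minus the ratio of the sizes of their supports.

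More concretely, first I would recall that by \Cref{corollary:marginal_B_1}, each of $\cL_\up$ and $\cL_\dn$ is the uniform distribution over the set $S \sse \bin^{n/2 \times n}$ of full-row-rank matrices. Let $\mu_S$ denote uniform over $S$ and $\mu_\Omega$ denote uniform over $\Omega = \bin^{n/2 \times n}$. Then for any event $E \sse \Omega$,
\[
|\mu_S(E) - \mu_\Omega(E)| = \left| \frac{|E \cap S|}{|S|} - \frac{|E|}{|\Omega|} \right|,
\]
and the standard computation of total variation distance between two uniform distributions on nested supports gives
\[
\dtv(\mu_S, \mu_\Omega) = 1 - \frac{|S|}{|\Omega|}.
\]

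Next, I would apply \Cref{fact:full_row_rank}, which states that $|S|/|\Omega| \ge 1 - (n/2) 2^{-n/2}$, to conclude that
\[
\dtv(\cL_\up, \mu_\Omega) = \dtv(\cL_\dn, \mu_\Omega) = 1 - \frac{|S|}{|\Omega|} \le \frac{n}{2} \cdot 2^{-n/2},
\]
as desired. There is no substantive obstacle here — the statement is a direct consequence of the preceding lemma and fact, so the ``proof'' is essentially a two-line assembly; the real content is in \Cref{corollary:marginal_B_1} (whose proof characterizes the marginals) and in the counting of full-rank rectangular matrices over $\bin$ used in \Cref{fact:full_row_rank}.
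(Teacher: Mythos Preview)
Your proposal is correct and matches the paper's approach exactly: the paper states this corollary as an ``immediate corollary'' of \Cref{corollary:marginal_B_1} and \Cref{fact:full_row_rank} without writing out any further argument, and your two-line assembly (uniform on $S$ versus uniform on $\Omega$ with $S\sse\Omega$ has TV distance $1-|S|/|\Omega|$) is precisely the intended deduction.
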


We now give the proofs of~\Cref{corollary:marginal_B_1} and~\Cref{fact:full_row_rank}.

\begin{proof}[Proof of~\Cref{corollary:marginal_B_1}]
We will prove this for the distribution $\cL_\up$ and the argument for $\cL_\dn$ is identical. Recall that $\bB$ is sampled according to a uniformly random full-rank matrix. Fix any full row-rank matrix $B_\up$. We will count the number of matrices $B_\dn$ such $(B_\up,B_\dn)$ has full rank. We will show that this number is precisely $(2^{n}-2^{n/2})\times (2^{n}-2^{n/2+1})\times \ldots\times (2^{n/2}-2^{n-1})$. This would complete the proof.

To count the number of $B_\dn$, we first count the number of possibilities for each row of $B_\dn$. Firstly, each row of $B_\dn$ must not lie in the span of the previous rows of $B_\dn$ and the rows of $B_\up$. The first row of $B_\dn$ is any vector not in the span of the rows of $B_\up$ and thus has $2^{n}-2^{n/2}$ possibilities. Having fixed this, the second row of $B_\dn$ can be any vector that is not in the span of the first row of $B_\dn$ and the rows of $B_\up$, and thus has $2^{n}-2^{n/2+1}$ possibilities. We repeat this argument and at the $i$-th step, we choose a vector that is not in the $n/2+i-1$-dimensional space spanned by the first $i-1$ rows of $B_\dn$ and the $n/2$ rows of $B_\up$; this can be done in $2^{n}-2^{n/2+i-1}$ ways. Doing this for all $n/2$ rows shows that the total number of possibilities for $B_\dn$ is precisely $\prod_{i=1}^{n/2}(2^{n}-2^{n/2+i-1})$ and this completes the proof of~\Cref{corollary:marginal_B_1}.
\end{proof}

\begin{proof}[Proof of \cref{fact:full_row_rank}]
We perform a calculation identical to that in~\Cref{corollary:marginal_B_1}. By a similar argument, we can show that the number of matrices in $\bin^{n/2 \times n}$ with full row-rank is precisely 
\[\prod_{i=1}^{n/2} (2^n-2^{i-1}).\]
Thus, the probability that a uniformly random $n/2\times n$ matrix is of full row-rank is precisely
\begin{align*}
\frac{\prod_{i=1}^{n/2}(2^n-2^{i-1})}{2^{n^2/2}} =\prod_{i=1}^{n/2}\pbra{\frac{2^n-2^{i-1}}{2^n}} \ge (1-2^{-n/2})^{n/2}\ge 1 - (n/2)2^{-n/2}.\end{align*}
This completes the proof of~\Cref{fact:full_row_rank}.
\end{proof}


\section{Classical Lower Bound} \label{sec:classical-lower-bound}

The main technical ingredient in the classical lower bound will be~\Cref{lem:main_lemma}. We will describe this lemma and its proof in~\Cref{sec:main_lemma}. We will then prove~\Cref{thm:main_theorem} in~\Cref{sec:main_theorem} assuming this. 

\subsection{Statement of \texorpdfstring{\Cref{lem:main_lemma} and its Proof}{Main Lemma}}
\label{sec:main_lemma}

The main technical ingredient is the following.
\begin{restatable}{lem}{mainlemma}
\label{lem:main_lemma}
Let $x,y\in \{0,1\}^n$ be any two vectors. Then, 
\[ \Prx_{\substack{(\bA_\dn ,\bB_\up )\sim \cL_{\dn\up}\\ \ba\sim\bin^n}}[\bA_\dn x=\bB_\up  y + \bB_\up \ba] = 2^{-n/2}.\]
\end{restatable}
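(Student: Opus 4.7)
The plan is to reduce the probability to a one-line computation by conditioning on $(\bA, \bB)$ and exploiting the fact that $\ba$ is a uniform shift independent of $\bA$. First I would rewrite the event $\bA_\dn x = \bB_\up y + \bB_\up \ba$ as
\[
\bB_\up \ba \;=\; \bA_\dn x + \bB_\up y
\]
over $\F_2$. Conditioning on $\bA$ (which also determines $\bB$), the right-hand side is a fixed vector $c := \bA_\dn x + \bB_\up y \in \bin^{n/2}$, so the conditional probability reduces to $\Pr_{\ba}[\bB_\up \ba = c]$, where $\ba \sim \bin^n$ is independent of $\bA$.

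Next I would observe that since $\bA \in \bin^{n\times n}$ is drawn from the uniform distribution over invertible matrices, $\bB = (\bA^T)^{-1}$ is also invertible, so in particular $\bB_\up \in \bin^{n/2 \times n}$ has full row rank. Consequently the linear map $\ba \mapsto \bB_\up \ba$ is a surjection $\bin^n \to \bin^{n/2}$ with every fiber of size exactly $2^{n/2}$. Hence $\Pr_{\ba \sim \bin^n}[\bB_\up \ba = c] = 2^{-n/2}$ for every choice of $c \in \bin^{n/2}$ and in particular for the data-dependent target $c = \bA_\dn x + \bB_\up y$. The law of total probability then gives
\[
\Prx\big[\bA_\dn x = \bB_\up y + \bB_\up \ba\big] \;=\; \E_{\bA}\!\left[\,\Prx_{\ba}[\bB_\up \ba = \bA_\dn x + \bB_\up y \mid \bA]\,\right] \;=\; 2^{-n/2}.
\]

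There is essentially no hard step in this argument: the delicate object is the joint distribution of $(\bA_\dn, \bB_\up)$, which is highly correlated through the relation $\bB = (\bA^T)^{-1}$ and would be the main obstacle in a direct collision computation, but it is rendered irrelevant by the affine shift. The only feature of $\bB_\up$ that is used is that it has full row rank, which follows deterministically from the invertibility of $\bA$. This is consistent with the authors' remark in the proof overview that the affine shift (as opposed to a purely linear transformation) substantially simplifies the mixed collision analysis between the $\bA_\dn x_i$ and the $\bB_\up y_j$.
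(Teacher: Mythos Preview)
Your proof is correct and follows essentially the same approach as the paper: rewrite the event as $\bB_\up \ba = \bA_\dn x + \bB_\up y$, condition on $(\bA,\bB)$, and use that $\bB_\up$ has full row rank (deterministically, from the invertibility of $\bB$) together with the independence and uniformity of $\ba$ to conclude the conditional probability is exactly $2^{-n/2}$. Your added commentary on why the affine shift neutralizes the correlation between $\bA_\dn$ and $\bB_\up$ matches the paper's own remarks.
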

In the above lemma, $(\bA_\dn ,\bB_\up )\sim \cL_{\dn\up}$ and $\ba\sim \bin^n$ are sampled independently, just as in~\Cref{def:hard_matrices}. We will also require the following lemma. 
\begin{restatable}{lem}{mainlemmatwo}
\label{lem:main_lemma_2}
For $x\neq x'\in \bin^n$ and $y\neq y'\in \bin^n$, we have 
\[\Prx_{\bA_\dn\sim \cL_\dn}[\bA_\dn x=\bA_\dn x']\le (n/2+1)\cdot 2^{-n/2}\]
\[\Prx_{\bB_\up\sim \cL_\up}[\bB_\up y=\bB_\up y']\le (n/2+1)\cdot 2^{-n/2}.\]
\end{restatable}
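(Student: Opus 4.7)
The plan is to reduce the question to a uniform-matrix computation and then pay a small additive cost to move from $\cL_\dn$ (resp.\ $\cL_\up$) back to the uniform distribution on $\bin^{n/2\times n}$ using \Cref{corollary:marginal_B_1_uniform}. Concretely, for $x\neq x'\in\bin^n$, set $z:=x+x'\in\bin^n$ (addition in $\F_2$), which is nonzero by hypothesis. Then
\[
\bA_\dn x=\bA_\dn x'\iff \bA_\dn z=0,
\]
so the quantity of interest is $\Pr_{\bA_\dn\sim\cL_\dn}[\bA_\dn z=0]$.

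Next, I would compute this probability for $\bM$ drawn uniformly at random from $\bin^{n/2\times n}$, and show it equals $2^{-n/2}$. The rows of $\bM$ are independent and each row $\bv\in\bin^n$ is uniform, so $\Pr_{\bv}[\langle \bv,z\rangle=0]=\tfrac{1}{2}$ because $z\neq 0$ (the set $\{v:\langle v,z\rangle=0\}$ is a hyperplane of $\bin^n$, containing exactly half of $\bin^n$). Independence across the $n/2$ rows then yields
\[
\Prx_{\bM\sim\mathrm{unif}(\bin^{n/2\times n})}\!\left[\bM z=0\right]=\pbra{\tfrac{1}{2}}^{n/2}=2^{-n/2}.
\]

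Finally, I would invoke \Cref{corollary:marginal_B_1_uniform}, which says that $\cL_\dn$ is within total variation distance $(n/2)\cdot 2^{-n/2}$ of the uniform distribution on $\bin^{n/2\times n}$. Since $\{A_\dn:A_\dn z=0\}$ is a fixed event, the probability it assigns under $\cL_\dn$ differs from its probability under the uniform distribution by at most $(n/2)\cdot 2^{-n/2}$. Hence
\[
\Prx_{\bA_\dn\sim\cL_\dn}[\bA_\dn z=0]\;\le\; 2^{-n/2}+(n/2)\cdot 2^{-n/2}\;=\;(n/2+1)\cdot 2^{-n/2},
\]
which is the first bound. The second bound, for $\bB_\up$, follows by the identical argument: $\cL_\up$ is also $(n/2)\cdot 2^{-n/2}$-close to uniform on $\bin^{n/2\times n}$ by \Cref{corollary:marginal_B_1_uniform}, and $\bB_\up y=\bB_\up y'$ iff $\bB_\up(y+y')=0$ with $y+y'\neq 0$.

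There is essentially no obstacle here: the entire content of the lemma is captured by (i) the one-line observation that both distributions are close to uniform (already done in \Cref{corollary:marginal_B_1_uniform}) and (ii) the standard fact that a uniform $\F_2$-vector is orthogonal to a fixed nonzero vector with probability exactly $1/2$. The only thing to be careful about is to phrase the TV-distance step as an event-probability bound rather than, say, a pointwise density comparison.
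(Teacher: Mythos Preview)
Your proposal is correct and follows essentially the same approach as the paper: rewrite the event as $\bA_\dn z=0$ for $z=x-x'\neq 0$, compute the probability under the uniform distribution on $\bin^{n/2\times n}$ to be $2^{-n/2}$, and then use \Cref{corollary:marginal_B_1_uniform} to pay the additive $(n/2)\cdot 2^{-n/2}$ TV cost. The paper's proof is slightly terser (it just observes that $\bA_\dn z$ is uniform in $\bin^{n/2}$ rather than arguing row by row), but the content is identical.
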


 We now prove these two lemmas.

 \begin{proof}[Proof of~\Cref{lem:main_lemma}]
Let $\cE$ be the event $\bA_\dn x = \bB_\up y + \bB_\up \ba$. This is equivalent to $\bB_\up \ba=\bA_\dn x+  \bB_\up y$. Now, regardless of what $x,y$ are, the vector $\ba$ is distributed as a uniformly random vector in $\bin^n$ that is \emph{independent} of $\bA,\bB$, because $\ba$ is sampled uniformly and independently of $\bA,\bB$. Furthermore, $\bB_\up$ has full row-rank. Therefore, fixing $\bA=A$ and $\bB=B$, we see that the probability over $\ba$ that $B_\up\ba$ is equal to $A_\dn x + B_\up y$ is precisely $2^{-n/2}$. This completes the proof.
 \end{proof}

 \begin{proof}[Proof of~\Cref{lem:main_lemma_2}]We prove this lemma for $\bA_\dn\sim \cL_\dn$ and the proof of $\bB_\up\sim \cL_\up$ is identical. The event we wish to bound the probability of is $[\bA_\dn(x-x')=0]$.  We use~\Cref{corollary:marginal_B_1_uniform} to conclude that the total variational distance between $\cL_\dn$ and the uniform distribution is at most $(n/2)\cdot 2^{-n/2}$. Let us now work with a uniformly random matrix $\bA_\dn$. Since $(x-x')\neq 0$, the vector $\bA_\dn(x-x')$ is a uniformly random vector in $\bin^{n/2}$. Therefore, the probability that it is zero is at most $2^{-n/2}$. This completes the proof. \end{proof}

 We now complete the proof of~\Cref{thm:main_theorem} using these lemmas.

 \subsection{\texorpdfstring{Proof of~\Cref{thm:main_theorem}}{Proof of Main Theorem}}
\label{sec:main_theorem}

Let $\cH$ be any family of all $n/2$-variate boolean functions. Consider a classical randomized query protocol for the \maxforrelationproblem ~with $D$ queries. Recall from~\Cref{lemma:hard_distributions} that $\muyes$ and $\muno$ are supported on the \textsc{yes} and \textsc{no} instances of the \maxforrelationproblem. Given a randomized query protocol with $D$ queries for the \maxforrelationproblem  ~that succeeds with at least $2/3$ probability, by Yao's principle, there exists a \emph{deterministic} decision tree of depth $D$ that distinguishes $\muyes$ and $\muno$ with advantage at least $1/3$. 

Given such a deterministic decision tree of depth $D\le 2^{n/4}/6$ and a root-to-leaf path $\cP$ of length $\ell$ in the tree, each node in the path $\cP$ corresponds to either a query of the form $f(x)$ (where the truth table of $f$ is probed), or a query of the form $g(y)$ (where the truth table of $g$ is probed). We assume without loss of generality that the query vectors for $f$ are distinct and similarly the query vectors for $g$ are distinct (there may be common query vectors which are given to both $f$ and $g$). We will then use the following claims.

\begin{claim}\label{claim:collision_probability} 
Consider any deterministic decision tree of depth $D\le 2^{n/4}/(6 \sqrt{n})$, and fix any root-to-leaf path $\cP$ of length $\ell \leq D$ in the tree.
Let $x^{(1)},\ldots,x^{(k)}\in\bin^n$ and $y^{(1)},\ldots,y^{(\ell-k)}\in\bin^n$ be the sequence of vectors queried. 
Let $\bA,\bB,\ba$ be distributed as in \Cref{def:hard_matrices}, and consider the sequence of points
\begin{equation} \label{eq:key-sequence}
\bA_\dn x^{(1)},\ldots,\bA_\dn x^{(k)} \quad \text{and}\quad \bB_\up y^{(1)} +\bB_\up \ba,\ldots,\bB_\up y^{(\ell- k)}+\bB_\up \ba.
\end{equation}
Let $\cE$ be the event that \eqref{eq:key-sequence} is a sequence of $\ell$ \emph{distinct} points. Then, $\Pr_{\muyes}[\cE],\Pr_{\muno}[\cE]\ge 9/10$. 
\end{claim}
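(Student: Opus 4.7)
My plan is to show $\cE$ fails with probability at most $1/10$ by union-bounding over the three types of potential collisions among the $\ell$ points listed in \eqref{eq:key-sequence}. The key observation is that the event $\cE$ depends only on $(\bA_\dn,\bB_\up,\ba)$, which are distributed identically under $\muyes$ and $\muno$ (they differ only in the sign of $\chi_{\bg}$ and in the independent choice of $\bh$). So it suffices to prove the desired bound for the distribution on $(\bA,\bB,\ba)$ from \Cref{def:hard_matrices}, and the same bound automatically transfers to both $\muyes$ and $\muno$.

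Having fixed the root-to-leaf path $\cP$, the queried vectors $x^{(1)},\ldots,x^{(k)}$ and $y^{(1)},\ldots,y^{(\ell-k)}$ are deterministic, with the $x^{(i)}$'s pairwise distinct and the $y^{(j)}$'s pairwise distinct. I partition possible collisions into three families:
\begin{itemize}
    \item[(i)] $\bA_\dn x^{(i)}=\bA_\dn x^{(i')}$ for $1\le i<i'\le k$, equivalent to $\bA_\dn(x^{(i)}-x^{(i')})=0$ with $x^{(i)}-x^{(i')}\neq 0$. By \Cref{lem:main_lemma_2}, each such event has probability at most $(n/2+1)\cdot 2^{-n/2}$.
    \item[(ii)] $\bB_\up y^{(j)}+\bB_\up\ba=\bB_\up y^{(j')}+\bB_\up\ba$ for $1\le j<j'\le \ell-k$, which after cancelling $\bB_\up\ba$ becomes $\bB_\up(y^{(j)}-y^{(j')})=0$. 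Again by \Cref{lem:main_lemma_2}, each such event has probability at most $(n/2+1)\cdot 2^{-n/2}$.
    \item[(iii)] $\bA_\dn x^{(i)}=\bB_\up y^{(j)}+\bB_\up\ba$ for some $i\in[k],j\in[\ell-k]$. By \Cref{lem:main_lemma}, each such event has probability exactly $2^{-n/2}$.
\end{itemize}

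Then I just sum up. The total number of pairs is $\binom{k}{2}+\binom{\ell-k}{2}+k(\ell-k)=\binom{\ell}{2}\le \ell^2/2\le D^2/2$, and each has collision probability at most $(n/2+1)\cdot 2^{-n/2}$. A union bound gives
\[
\Prx[\neg\cE]\;\le\;\frac{D^2}{2}\cdot (n/2+1)\cdot 2^{-n/2}\;\le\;\frac{1}{2}\cdot\frac{2^{n/2}}{36n}\cdot\frac{n+2}{2}\cdot 2^{-n/2}\;=\;\frac{n+2}{144n}\;\le\;\frac{1}{10},
\]
using $D\le 2^{n/4}/(6\sqrt{n})$, and this completes the plan.

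The substantive content was absorbed into \Cref{lem:main_lemma} and \Cref{lem:main_lemma_2}, so I do not expect any real obstacle here; the only care needed is in the bookkeeping to notice that type-(ii) collisions reduce to a statement about $\bB_\up$ alone (the affine shift $\bB_\up\ba$ cancels), and that the distinctness assumption on the $x^{(i)}$'s and on the $y^{(j)}$'s is exactly what is required to invoke \Cref{lem:main_lemma_2}, while no such assumption is needed for the cross-collisions of type (iii) which use \Cref{lem:main_lemma}.
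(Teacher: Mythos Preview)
Your proposal is correct and follows essentially the same approach as the paper: a union bound over the three types of collisions, invoking \Cref{lem:main_lemma_2} for the within-block collisions and \Cref{lem:main_lemma} for the cross collisions, then plugging in $D\le 2^{n/4}/(6\sqrt n)$. The only differences are cosmetic bookkeeping (you count pairs via $\binom{\ell}{2}$ rather than the paper's looser $k^2+(\ell-k)^2+k(\ell-k)$, and you explicitly note that $\cE$ depends only on $(\bA_\dn,\bB_\up,\ba)$ so the bound is literally identical for $\muyes$ and $\muno$).
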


\begin{claim}\label{claim:path_probability} Let $\cH$ be the family of all $n/2$-variate boolean functions.
Under the same hypothesis as~\Cref{claim:collision_probability}, whenever $\cE$ occurs, the probability of taking path $\cP$ for the distributions $\muyes|\cE$ and $\muno|\cE$ is exactly $2^{-\ell}$.\end{claim}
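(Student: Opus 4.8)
The plan is to condition on all the randomness except the uniformly random function $\bh$, and to use the fact that along a fixed root-to-leaf path every query answer is a fixed sign times a single evaluation of $\bh$. First I would fix a tuple $(A,B,a,b)$ in the support of the distribution of~\Cref{def:hard_matrices}, and recall that, since $\cH$ is the family of \emph{all} $n/2$-variate boolean functions, the only remaining randomness in both $\muyes$ and $\muno$ is $\bh$, which is uniform over functions $\bin^{n/2}\to\bin$ and independent of $(\bA,\bB,\ba,\bb)$. The crucial structural point is that the event $\cE$ of~\Cref{claim:collision_probability} is a function of $(\bA,\bB,\ba)$ only, and does \emph{not} involve $\bh$; hence conditioning on $\cE$ does not change the conditional law of $\bh$ given the matrices and shifts --- it remains uniform and independent.

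Next I would unwind~\Cref{def:hard_distributions} to isolate the $\bh$-dependence of each query answer. For an $f$-query at a point $x$, the answer is $\chi_{\boldf}(x)=(-1)^{\abra{A_\up x,A_\dn x}+\abra{x,a}}\cdot(-1)^{\bh(A_\dn x)}$, and for a $g$-query at a point $y$, the answer under $\muyes$ is $\chi_{\bg}(y)=(-1)^{\abra{B_\up y,B_\dn y}+\abra{y,b}+\abra{B_\up a,B_\dn a}}\cdot(-1)^{\bh(B_\up y+B_\up a)}$, while under $\muno$ it is the negative of this. In every case the answer has the form $s\cdot(-1)^{\bh(p)}$, where $s\in\{\pm1\}$ is determined by $(A,B,a,b)$ alone (and by whether we are in the \textsc{yes} or \textsc{no} case), and $p$ is exactly the corresponding point of the sequence~\eqref{eq:key-sequence}. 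Because the decision tree is deterministic, fixing the path $\cP$ fixes its $\ell$ queries to a deterministic sequence $q_1,\dots,q_\ell$ --- the query at step $t$ being determined by the answers $\sigma_1,\dots,\sigma_{t-1}$ prescribed by $\cP$ at the earlier nodes --- with associated signs $s_1,\dots,s_\ell$ and points $p_1,\dots,p_\ell$; the tree traverses $\cP$ precisely when the $t$-th answer equals $\sigma_t$ for all $t$.

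Now I would invoke $\cE$: when it holds for $(A,B,a)$, the points $p_1,\dots,p_\ell$ are pairwise distinct, so $(\bh(p_1),\dots,\bh(p_\ell))$ is uniform on $\bin^\ell$ (a uniformly random function takes independent uniform values on distinct inputs), and hence $(s_1(-1)^{\bh(p_1)},\dots,s_\ell(-1)^{\bh(p_\ell)})$ is uniform on $\{\pm1\}^\ell$, since coordinatewise multiplication by fixed signs is a bijection of the Boolean cube. Therefore, for the fixed target vector $(\sigma_1,\dots,\sigma_\ell)$, the probability over $\bh$ of traversing $\cP$ equals exactly $2^{-\ell}$, for every fixed $(A,B,a,b)$ satisfying $\cE$ and for either distribution. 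Averaging this identity over $(\bA,\bB,\ba,\bb)$ conditioned on $\cE$ gives that the probability of traversing $\cP$ under $\muyes\mid\cE$, and likewise under $\muno\mid\cE$, is exactly $2^{-\ell}$, which is the claim.

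I do not expect a genuine obstacle; the only point requiring care is the bookkeeping around conditioning. One must check both that $\cE$ depends on $(\bA,\bB,\ba)$ alone --- which is what makes $\bh$ still uniform and independent under $\mu\mid\cE$ given the matrices and shifts --- and that along a fixed path the tree's adaptivity collapses, so that $q_t,s_t,p_t$ are well-defined deterministic objects. It is worth noting that this is exactly the step that uses the hypothesis that $\cH$ consists of \emph{all} $n/2$-variate boolean functions: for a proper subfamily, the values of $\bh$ on distinct points need not be independent or uniform, and the conclusion could fail.
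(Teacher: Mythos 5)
Your proof is correct and takes essentially the same approach as the paper: isolate the $\bh$-contribution to each query answer as a single evaluation at a point of~\eqref{eq:key-sequence}, use $\cE$ to get distinctness of those points, and then use that a uniformly random $\bh$ gives independent uniform bits on distinct inputs. You are somewhat more careful than the paper about the bookkeeping --- explicitly noting that $\cE$ depends only on $(\bA,\bB,\ba)$ so conditioning leaves $\bh$ uniform, writing each answer as a deterministic sign times $(-1)^{\bh(p)}$, and remarking that a fixed path collapses the tree's adaptivity --- but these are spelled-out versions of the same argument, not a different one.
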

%
%
%
%

\paragraph*{Remark.} \Cref{claim:path_probability} is the \emph{only} place where the properties of $\cH$ come into play -- every other part of the proof is \emph{independent} of $\cH$.

Once we have these claims, the proof follows quite easily. Let $\cH$ be the family of all $n/2$-variate boolean functions. Let us look at the induced distributions $\muyesl$ and $\munol$ on the leaves of the decision tree when the inputs are sampled according to $\muyes$ and $\muno$ respectively and let $\mu$ be the distribution on the leaves induced by a truly random walk down the tree.~\Cref{claim:collision_probability} and ~\Cref{claim:path_probability} imply that for any leaf, the probability that $\muyesl$ and $\munol$ assign to that leaf are each at least $9/10$ times the probability assigned by $\mu$. This implies that  there exists distributions ${\muyesltil},{\munoltil}$ on the leaves such that
\[ \muyesl = \tfrac{9}{10} \mu+ \tfrac{1}{10}{\muyesltil},\]
\[ \munol = \tfrac{9}{10} \mu+ \tfrac{1}{10}{\munoltil}.\]
This implies that the total variational distance between $\muyesl$ and $\munol$ is at most $1/10$ and hence, the output distributions of the decision tree on inputs sampled according to $\muyes$ and $\muno$ differ in total variational distance by at most $1/10$. This contradicts the assumption that the decision tree distinguishes these distributions with advantage at least $1/3$. This completes the proof of~\Cref{thm:main_theorem} assuming~\Cref{claim:collision_probability} and~\Cref{claim:path_probability}. We will now prove these claims.

\subsection{Proof of \texorpdfstring{\Cref{claim:collision_probability}}{Claim 3.1} and \texorpdfstring{\Cref{claim:path_probability}}{Claim 3.2}}

We now proceed to the proof of~\Cref{claim:collision_probability}, which is where we will use \Cref{lem:main_lemma} and \Cref{lem:main_lemma_2}. We will then prove~\Cref{claim:path_probability} which relies primarily on the properties of $\cH$.

\begin{proof}[Proof of~\Cref{claim:collision_probability}]
 We first analyze the probability of $\cE$ when the distribution on inputs is $\muyes$. The calculation for $\muno$ is identical and is omitted. Note that \eqref{eq:key-sequence} is a random sequence of points (where the randomness comes from $\bA,\bB$ and $\ba$). As stated in \Cref{claim:path_probability}, let $\cE$ be the event that this is a sequence of $\ell$ \emph{distinct} points. We will now argue that $\cE$ is a high-probability event.  

First, let us bound the probability of collisions within the $\bA_\dn x^{(i)}$. Fix any $i\neq i'\in [k]$. We apply~\Cref{lem:main_lemma_2} to the vectors $x^{(i)}\neq x^{(i')}$. ~\Cref{lem:main_lemma_2} implies that
\[ \Prx[\bA_\up x^{(i)}=\bA_\up x^{(i')}]\le (n/2+1)\cdot 2^{-n/2}. \]
We now apply a union bound over $i,i'\in[k]$. There are at most $k^2$ possibilities to union bound over. This implies that with probability at least $1-k^2\cdot(n/2+1)\cdot 2^{-n/2}$, the sequence of points 
\[\bA_\dn x^{(1)},\ldots,\bA_\dn x^{(k)}\] 
is a sequence of $k$ distinct points. We can similarly argue about collisions within the $\bB_\up y^{(j)}$ to conclude that with probability at least $1-(\ell-k)^2\cdot (n/2+1)\cdot 2^{-n/2}$, the sequence of points 
\[\bB_\up y^{(1)} + \bB_\up \ba ,\ldots,\bB_\up y^{(\ell-k)} + \bB_\up \ba\] 
is a sequence of $\ell-k$ distinct points. Finally, we argue about collisions between pairs of the form $\bA_\dn x^{(i)}$ and $\bB_\up y^{(j)} +\bB_\up \ba$. Fix any $i\in[k]$ and $j\in[\ell-k]$. We apply~\Cref{lem:main_lemma} to the vectors $x^{(i)}$ and $y^{(j)}$. ~\Cref{lem:main_lemma} implies that
\[ \Prx[\bA_\dn  x^{(i)}=\bB_\up y^{(j)}+\bB_\up \ba]\le   2^{-n/2}.\]
We now apply a union bound over $(i,j)$. There are at most $k\cdot (\ell-k)$ possibilities to union bound over. This implies that with probability at least $1- k\cdot (\ell-k) \cdot 2^{-n/2}$, 
there are no collisions among pairs of the form $\bA_\dn x^{(i)}$ and $\bB_\up y^{(j)}+\bB_\up \ba$. So by a union bound, the total probability of the bad event $\neg \cE$ is at most
\[k^2 \cdot (n/2+1)\cdot 2^{-n/2} + (\ell-k)^2\cdot (n/2+1)\cdot 2^{-n/2} + k\cdot (\ell-k)\cdot 4\cdot 2^{-n/2}\le 3\cdot D^2\cdot n\cdot 2^{-n/2}.\]
for large enough $n$. Recall that we set $D\le 2^{n/4}/(6\sqrt{n})$, so we get that
\[ 3\cdot D^2\cdot n\cdot 2^{-n/2}\le 3\cdot 2^{n/2}\cdot\tfrac{1}{36n} \cdot n \cdot 2^{-n/2} < \tfrac{1}{10}.\]
This shows that $\Prx_{\muyes}[\cE]\ge 9/10$. The argument for $\muno$ is identical. This completes the proof of~\Cref{claim:collision_probability}.
\end{proof}

\begin{proof}[Proof of~\Cref{claim:path_probability}]
 We first analyze the probability of receiving any particular sequence of outcomes when the distribution on inputs is $\muyes$. The calculation for $\muno$ is identical and is omitted.

Recall the $\muyes$ distribution. This is obtained by sampling $\bA,\bB,\ba,\bb$ as in~\Cref{def:hard_distributions}, sampling $\bh$ uniformly at random from $\cH$, and outputting $(\chi_{\boldf},\chi_{\bg})$ where $\boldf,\bg:\bin^n\to \bin$ are defined as:
\begin{align*} 
\begin{split}
\boldf(x)&:= \abra{ \bA_\up x, \bA_\dn x}  + \abra{x,\ba} +\bh(\bA_\dn  x ) \\ \quad \bg(y)&:=\abra{ \bB_\up  y, \bB_\dn  y }  + \abra{y,\bb}+ \bh(\bB_\up  y+ \bB_\up \ba) +\abra{\bB_\up \ba,\bB_\dn \ba}.\end{split}
\end{align*}
Along the path $\cP$, we have queried $\boldf(x^{(1)}),\ldots,\boldf(x^{(k)})$ and $\bg(y^{(1)}),\ldots,\bg(y^{(\ell-k)})$. Let us consider the contribution of $\bh(\circ)$ to these query responses. This is given by evaluating $\bh(\circ)$ on the following sequence of points
\[\bA_\dn x^{(1)},\ldots,\bA_\dn x^{(k)} \quad \text{and}\quad \bB_\up y^{(1)} +\bB_\up \ba,\ldots,\bB_\up y^{(\ell- k)}+\bB_\up \ba\]
which is precisely the sequence given in \Cref{eq:key-sequence}. We will now argue that when $\cE$ happens, the probability of taking this path under $\muyes$ (and similarly $\muno$) is precisely $2^{-\ell}$. Let us compute the probability of taking the path $\cP$ under $\muyes$ conditioned on $\cE$ happening. When $\cE$ happens, we have that the sequence of points
\[\bA_\dn x^{(1)},\ldots,\bA_\dn x^{(k)} \quad \text{and}\quad \bB_\up y^{(1)} +\bB_\up \ba,\ldots,\bB_\up y^{(\ell- k)}+\bB_\up \ba\] 
is a sequence of $\ell$ distinct points. We now observe that the evaluations of the function $\bh$ on these points are independent and uniformly random bits in $\{0,1\}$. In other words,
\[\bh(\bA_\dn x^{(1)}),\ldots,\bh(\bA_\dn x^{(k)}) \quad \text{and}\quad \bh(\bB_\up y^{(1)} +\bB_\up \ba),\ldots,\bh(\bB_\up y^{(\ell- k)}+\bB_\up \ba)\] 
is a sequence of uniformly random bits in $\{0,1\}$ when $\bh\sim \cH$. Thus, the probability of receiving any particular sequence of outcomes when querying the truth tables of $\boldf,\bg$ at the $\ell$ vectors $x^{(1)},\ldots,x^{(k)},$ $y^{(1)},\ldots,y^{(\ell-k)}$ is either exactly $2^{-\ell}$. This completes the proof.

\end{proof}

\subsection{Proof of \texorpdfstring{\Cref{thm:main_corollary}}{Corollary 1.6}}

 \begin{proof}[Proof of~\Cref{thm:main_corollary}]
 Let $\cH$ be the family of all boolean functions $\{h:\{0,1\}^{n/2}\to \{0,1\}\}$. It is well known~\cite{HILL99,GGM86} that if one-way functions exist, then we can construct a family of pseudorandom functions $\cH':=\{h_\lambda:\{0,1\}^{n/2}\to \{0,1\}\}_{\lambda\in\{0,1\}^{k(n)}}$ with $k(n)=\poly(n)$ such that
 \begin{itemize}
 \item \textbf{Efficiency:} there is a $\poly(n)$-sized classical circuit that computes $h_\lambda(x)$ given inputs $\lambda\in\{0,1\}^{k(n)}$ and $x\in \{0,1\}^{n/2}$.
 \item \textbf{Security:} Any classical polynomial-time algorithm $\cA$ that queries the truth-table of an $n/2$-bit Boolean function cannot sufficiently distinguish a uniformly random function in $\cH'$ from a truly uniformly random function, i.e.,
 \[ \abs{ \Ex_{\blambda\sim\{0,1\}^{k(n)}}\sbra{\cA^{h_{\blambda(x)}}(1^n)} - \Ex_{\bh\sim \cH}\sbra{\cA^{\bh(x)}(1^n)} } \le \negl(n). \]
 \end{itemize}

 Let $\muyes$ and $\muno$ be the hard distributions as in~\Cref{def:hard_distributions} defined with respect to $\cH$ and similarly $\muyesprime$ and $\munoprime$ be defined with respect to $\cH'$. 
 
 \paragraph*{Implementability of $f,g$.} For any pair $(f,g)$ drawn from either $\muyesprime$ or $\munoprime$, the functions $f,g$ are computable by polynomial-sized circuits. In more detail, for a fixed draw of $A,B,a,b,\lambda$, the circuit for $f$ takes input $x$,  first computes the inner products $\abra{ A_\up x,A_\dn x}$ and $\abra{x,a}$, then computes the vector $A_\dn x$, and finally applies the circuit for $h_\lambda$ on this vector and thus computes $f(x)=\abra{A_\up x,A_\dn x}+\abra{x,a}+h_\lambda(A_\dn x)$. All of these operations can be implemented by $\poly(n)$-sized classical circuits and the circuit for $g$ is analogous. We will now show that under the cryptographic assumption, there is no classical algorithm that distinguishes $\muyesprime$ and $\munoprime$ 
with $\poly(n)$ queries. 

\paragraph*{Classical Indistinguishability of $f,g$.} Let $\cA'$ be any classical algorithm that makes at most $\poly(n)$ queries to the truth tables of $f$ and $g$. \Cref{thm:main_theorem} shows that $\cA'$ cannot distinguish $\muyes$ and $\muno$ with advantage more than $\negl(n)$\footnote{While the statement of~\Cref{thm:main_theorem} only refers to $1/3$ advantage, the proof (\Cref{claim:collision_probability}) shows that the advantage of a classical algorithm in solving the \maxforrelationproblem ~scales as $2^{-n/2}$ times $D^2$ where $D$ is the number of queries. In particular, for $\poly(n)$-time algorithms (which make at most at most $\poly(n)$ queries), this advantage is $\negl(n)$.}. We will now show that under the cryptographic assumption, $\cA'$ cannot distinguish $\muyes$ and $\muyesprime$ with more than $\negl(n)$ advantage. By the same argument, an analogous statement holds for the distributions $\muno$ and $\munoprime$. Consequently, by the triangle inequality, we get that $\cA'$ cannot distinguish $\muyesprime$ and $\munoprime$ with  more than $\negl(n)$ advantage and this completes the proof.

To see that $\cA'$ cannot distinguish $\muyes$ and $\muyesprime$ with more than $\negl(n)$ advantage, we show that any such algorithm can be turned into a distinguisher for $\cH$ and $\cH'$. Indeed, consider the algorithm $\cA$ that given query access to an unknown $n/2$-bit boolean function $h$, first samples $(\bA,\bB,\ba,\bb)$ as in~\Cref{def:hard_matrices} and runs $\cA'$ on the pair of functions $(f,g)$ as defined in the \yes distribution in ~\Cref{def:hard_distributions}. Similarly to the argument before, each query to $f$ at the point $x$ can be simulated by making a query to $h$ at the point $A_\dn x$ and computing $f(x)=\abra{A_\up x,A_\dn x}+\abra{x,a}+h(A_\dn x)$, and similarly for $g$. Thus, a distinguisher for $\muyesprime$ and $\muyes$ can be turned into one for $\cH'$ and $\cH$ with the same number of queries and same advantage. This completes the proof.
 \end{proof}

\bibliographystyle{alpha}
\bibliography{updated_ref}

 \appendix

\section{Appendix: Upper bounding the maximum Rorrelation for a Haar random orthogonal matrix}

In this appendix we establish \Cref{claim:rorr} from the Introduction.
Recall the definition of the \emph{Rorrelation} function, 
\[
\rorr_U(f,g):=2^{-n} \sum_{x,y\in\bin^n}f(x)g(y)U_{x,y},
\] 
where $U$ is any $2^n\times 2^n$ unitary matrix. Let $N=2^n$.
 
\begin{lemma}
\label{lem:max_rorr} Consider the Rorrelation function $\rorr_{\bU}$ for an $N\times N$ Haar random orthogonal matrix $\bU$. With probability at least $1-2^{-\Theta(N)}$, we have
\[ \max_{f,g:\bin^n\to \{\pm 1\}} \rorr_{\bU}(f,g)\le 0.99.\]
\end{lemma}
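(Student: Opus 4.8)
The plan is to combine a union bound over all $\pm1$-valued pairs $(f,g)$ with a sharp large-deviation estimate for the inner product of a fixed unit vector with a uniformly random point of the sphere. Start from $\rorr_{\bU}(f,g)=N^{-1}\langle f,\bU g\rangle$, and note that for $f,g\colon\bn\to\bits$ we have $\|f\|_2=\|g\|_2=\sqrt N$. For a Haar-random $\bU\in O(N)$ and any fixed unit vector $u\in\R^N$, the vector $\bU u$ is distributed uniformly on the sphere $S^{N-1}\subseteq\R^N$ (its law is the unique $O(N)$-invariant probability measure on $S^{N-1}$). So, fixing $(f,g)$ and setting $\hat f:=f/\sqrt N$ and $\bw:=\bU(g/\sqrt N)$, one gets $\rorr_{\bU}(f,g)=\langle\hat f,\bw\rangle$, which by rotational invariance of $\bw$ is distributed exactly as $\bw_1$, the first coordinate of a uniform point of $S^{N-1}$.

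The key quantitative input is the tail of $\bw_1$. Its density on $[-1,1]$ equals $c_N(1-x^2)^{(N-3)/2}$ with $c_N=\Theta(\sqrt N)$, so bounding the integrand by its value at the lower endpoint gives, for every $t\in[0,1]$ and all large $N$,
\[
\Prx\!\left[\,|\bw_1|\ge t\,\right]\ \le\ 2\sqrt N\,(1-t^2)^{(N-3)/2}.
\]
At $t=0.99$ this is $2\sqrt N\,(0.0199)^{(N-3)/2}=2^{-\alpha N+O(\log N)}$ with $\alpha:=\tfrac12\log_2(1/0.0199)>2.8$. I would then union-bound over the $4^N$ pairs $(f,g)$: the probability that some pair has $|\rorr_{\bU}(f,g)|>0.99$ is at most $4^N\cdot 2^{-\alpha N+O(\log N)}=2^{-(\alpha-2)N+O(\log N)}=2^{-\Theta(N)}$, using $\alpha>2$. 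Since the family of pairs is symmetric under $f\mapsto-f$, controlling $|\rorr_{\bU}|$ controls $\max_{f,g}\rorr_{\bU}(f,g)$ as well, which is exactly \Cref{lem:max_rorr} (and hence \Cref{claim:rorr}).

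The one point that genuinely needs care — the main ``obstacle'', such as it is — is that one must use the true large-deviation form of the sphere tail, $(1-t^2)^{\Theta(N)}$, rather than the familiar sub-Gaussian bound $e^{-\Theta(Nt^2)}$: for $t$ bounded away from $1$ the sub-Gaussian bound is not strong enough to overcome the $4^N$ terms in the union bound (e.g.\ $e^{-Nt^2/2}$ at $t=0.99$ is only $2^{-0.71N}$), whereas $(1-t^2)^{(N-3)/2}$ at $t=0.99$ is about $2^{-2.82N}$, which beats $4^N$ comfortably. This is also exactly what forces the threshold to be a constant close to $1$: the argument as written needs $(1-t^2)^{1/2}<1/4$, i.e.\ $t>\sqrt{15}/4\approx 0.968$, and $0.99$ clears this. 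Establishing (or citing as a known fact about the uniform measure on $S^{N-1}$) the explicit density and the tail bound displayed above is then the only remaining ingredient.
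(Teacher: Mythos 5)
Your proof is correct, but it takes a genuinely different route from the paper's. The paper first optimizes $g$ out analytically (taking $g^*(y)=\sign\big((\bU f)_y\big)$, which reduces the maximum to $N^{-1}\max_f\|\bU f\|_1$), then union-bounds over only the $2^N$ choices of $f$; the resulting tail bound $\Pr[\|\bU f\|_1 \ge 0.99N]\le 3^{-N}$ is itself proved geometrically, by decomposing the complement of the $\ell_1$-ball into $2^N$ halfspaces and bounding each spherical cap's measure via a surface-area comparison with a small inscribed sphere of diameter $2\sqrt{1-0.99^2}$. You instead union-bound over all $4^N$ pairs $(f,g)$ and attack each event directly via the exact marginal density $c_N(1-x^2)^{(N-3)/2}$ of a coordinate of a Haar-random unit vector. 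Both arguments land on the same exponential rate (your $\alpha\approx 2.83$ against $4^N$ is the same as the paper's $7^{-(N-1)}$ against $2^N\cdot 2^N$, since $\sqrt{1-0.99^2}\approx 1/7.1$), and both correctly identify that a crude sub-Gaussian tail $e^{-Nt^2/2}$ is insufficient. Your version is more computational and shorter if one is willing to cite the coordinate density formula; the paper's version is more self-contained and geometric, at the cost of invoking Cauchy's surface-area formula and the two-stage union bound. One stylistic observation: the paper's intermediate step of optimizing over $g$ isolates a cleaner standalone statement (Fact~\ref{fact:spherical_cap} about $\|\bu\|_1$ for a random unit vector $\bu$), which your direct union bound over $(f,g)$ sidesteps.
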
 
A similar proof can be used for the event $\min_{f,g:\bin^n\to \{\pm 1\}}\rorr_{\bU}(f,g)\ge -0.99$, and this establishes that the largest value of $|\rorr_{\bU}(f,g)|$ is at most $0.99$ with probability at least $1-2^{-\Theta(N)}$. We now prove~\Cref{lem:max_rorr}.
\begin{proof}[Proof of~\Cref{lem:max_rorr}] Fix any Boolean function $f:\bin^n\to \{\pm 1\}$ and let us optimize over $g:\bin^n\to\{\pm 1\}$. Recall that (for a fixed given $U$)
\begin{equation}
\label{eq:max2}
\max_{g:\bin^n\to \{\pm 1\}} \rorr_U(f,g)\triangleq \max_{g:\bin^n\to \{\pm 1\}} 2^{-n}\sum_{y}\pbra{\sum_x U_{x,y}f(x)}g(y)\end{equation} 
It is easy to see that the optimal solution $g^*$ to~\Cref{eq:max2} over all $\{\pm 1\}$-valued functions is $g^*(y):=\sign(\sum_x U_{x,y} f(x))$, since this choice of $g^*$ makes all the summands in the outer sum non-negative. Interpreting $f$ as a vector in $\{\pm 1\}^N$ denoting the truth table of the function $f$, the optimizer to~\Cref{eq:max2} is $g^*=\sign(Uf)$ and the optimal value is  $N^{-1}\|Uf\|_1$. Thus,  
\begin{equation} \label{eq:max3} \max_{f,g:\bin^n\to \{\pm 1\}} \rorr_U(f,g) = N^{-1} \max_{f:\bin^n\to\{\pm 1\}} \|U f\|_1.\end{equation} 
We will now argue that for any fixed $f\in \{\pm 1\}^N$, with overwhelming probability (at least $1-3^{-N})$ over a Haar random orthogonal matrix $\bU$, we have $\|\bU f\|_1\le 0.99  N$. This, along with a union bound over all $f\in \{\pm 1\}^N$ (there are at most $2^{N}$ such functions), implies that with probability at least $1-(3/2)^N$, for all $\pm 1$-valued functions $f$, we have $\|\bU f\|_1\le 0.99 N$. This, along with~\Cref{eq:max3}, completes the proof of~\Cref{lem:max_rorr}.

Fix any $f\in \{\pm 1\}^N$. Observe that $\|f\|_2=\sqrt{N}$ and hence $v:=\tfrac{1}{\sqrt{N}}f$ is a unit vector in $\R^N$. Hence as we vary over Haar random orthogonal matrices $\bU$, the vector $\bU v$ is a Haar random unit vector in $\Su^{N-1}$. Intuitively, such a vector is highly unlikely to have $\ell_1$ norm larger than $0.99\sqrt{N}$ (recall that by Cauchy-Schwarz, $\sqrt{N}$ is the largest possible $\ell_1$ norm of any vector in $\Su^{N-1}$). More precisely,~\Cref{fact:spherical_cap} implies that with probability least $1-3^{-N}$ over a Haar random orthogonal matrix $\bU$, we have
\[ \|\bU v\|_1 \le 0.99 \sqrt{N}.\]
This completes the proof. 
\end{proof}

\begin{fact} \label{fact:spherical_cap} For a Haar random unit vector $\bu\sim \Su^{N-1}$, we have
\[ \Pr[\|\bu\|_1\ge 0.99\sqrt{N}]\le 3^{-N}.\]
\end{fact}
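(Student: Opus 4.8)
The plan is to pass to a Gaussian model and reduce the conical event $\{\|\bu\|_1 \ge 0.99\sqrt N\}$ to a one‑sided large–deviation estimate for an explicit sum of i.i.d.\ real random variables, then control that estimate by a Chernoff bound whose exponential rate I will check beats $\ln 3$.

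First I would use the standard fact that for $\bg\sim\mathcal N(0,I_N)$ the vector $\bg/\|\bg\|_2$ is Haar–uniform on the sphere (hence distributed as $\bu$), and that $\|\bg\|_2$ and $\bg/\|\bg\|_2$ are independent. Since the event $E:=\{\|\bg\|_1\ge 0.99\sqrt N\,\|\bg\|_2\}$ is scale–invariant in $\bg$, it depends only on the direction of $\bg$ and is therefore independent of $R:=\|\bg\|_2^2$. Fixing a small constant $\delta>0$ and setting $I=[(1-\delta)N,(1+\delta)N]$, concentration of $\chi^2_N$ gives $\Pr[R\in I]\ge 1/2$ for all large $N$, so by independence
$\Pr[\|\bu\|_1\ge 0.99\sqrt N]=\Pr[E]=\Pr[E\mid R\in I]\le 2\,\Pr\!\big[E\cap\{R\in I\}\big]$.
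This ``divide, don't subtract'' step is the key move: it avoids paying for the (only polynomially small) probability that $R$ leaves a narrow window, which is what would doom a naive split.

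Next, on $\{R\in I\}$ I would lower bound $\|\bg\|_2=\sqrt R$ by the chord of the concave map $\sqrt{\cdot}$ over $I$, which is an affine function of $R=\sum_i g_i^2$. Substituting this into the definition of $E$ and collecting terms turns $E\cap\{R\in I\}$ into the event $\sum_i(|g_i|-\mu g_i^2)\ge \nu N$ for explicit constants $\mu,\nu$ depending only on $0.99$ and $\delta$. Completing the square, $|g_i|-\mu g_i^2=\tfrac1{4\mu}-\mu\big(|g_i|-\tfrac1{2\mu}\big)^2$, so this is equivalent to $\sum_i Z_i\le \kappa N$, where $Z_i:=\big(|g_i|-\tfrac1{2\mu}\big)^2\ge 0$ and $\kappa=\tfrac1{4\mu^2}-\tfrac{\nu}{\mu}$; one checks $\kappa$ is a small positive constant (roughly $\tfrac{1-0.99^2}{0.99^2}\approx 0.02$ as $\delta\to 0$), so $\{\sum_i Z_i\le\kappa N\}$ is a genuine lower–tail deviation well below $\E[\sum_i Z_i]=\Theta(N)$. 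I would then apply a Chernoff bound to this event using the closed form $\E[e^{-\theta Z_1}]\le \tfrac{2}{\sqrt{2\theta+1}}\exp\!\big(-\tfrac{\theta m^2}{2\theta+1}\big)$ with $m=\tfrac1{2\mu}$ (obtained by completing the square in the Gaussian integral and extending it to all of $\R$), and optimizing over $\theta$; because $\kappa$ is tiny the optimal $\theta$ is large and the resulting rate comes out $\approx 1.2>\ln 3$, giving $\Pr[\sum_i Z_i\le\kappa N]\le 3^{-N}/2$ for all large $N$, which together with the factor $2$ above yields the claim.

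The main obstacle is the numerical margin: the rate in the final step exceeds $\ln 3$ only by a modest amount, and it degrades as $\delta$ grows, whereas $\Pr[R\in I]\ge 1/2$ requires $N$ large relative to $1/\delta^2$. So the delicate part of the write‑up is choosing $\delta$ small enough that the Chernoff rate (after absorbing the extra factor of $2$) is comfortably above $\ln 3$, and then recording the threshold $N_0$ beyond which the stated inequality holds — which is harmless since $N=2^n$ in the application.
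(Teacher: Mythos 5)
Your argument is correct but takes a genuinely different route from the paper. The paper's proof is geometric: it covers the target set on $\Su^{N-1}$ by the $2^N$ spherical caps $\{v : \langle \alpha, v\rangle \ge 0.99\sqrt{N}\}$, $\alpha \in \{\pm 1\}^N$, bounds the Haar measure of each cap by $(d/2)^{N-1}$ with $d = 2\sqrt{1-0.99^2} < 2/7$ by comparing its surface area to that of a small sphere of the same diameter (via monotonicity of surface area under inclusion of convex bodies, a consequence of Cauchy's surface area formula), and finishes with a union bound, $2^N \cdot 7^{-(N-1)} \le 3^{-N}$ once $N \ge 13$. Your proof instead works in the Gaussian model, uses scale-invariance to decouple the direction $\bg/\|\bg\|_2$ from $R = \|\bg\|_2^2$, conditions on the constant-probability window $R \in I$, linearizes $\sqrt{R}$ by the chord over $I$, completes the square, and runs a lower-tail Chernoff bound on $\sum_i(|\bg_i| - m)^2$. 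I checked the numerics: with $\delta = 0.1$ one gets $\mu \approx 0.495$, $m^2 \approx 1.02$, $\kappa \approx 0.023$, and the rate $\theta\kappa + \ln 2 - \tfrac{1}{2}\ln(2\theta+1) - \tfrac{\theta m^2}{2\theta+1}$ bottoms out near $-1.20$ around $\theta \approx 25$, safely below $-\ln 3 \approx -1.10$; as you flagged, the margin collapses for larger $\delta$ (it already fails at $\delta = 0.5$), so $\delta$ must be kept small and the threshold $N_0$ grows accordingly --- but both proofs hold only for $N$ beyond a modest constant, which is harmless since $N = 2^n$. The paper's cap-counting argument is shorter and needs no numerical calibration; your route replaces the convex-geometry input by an elementary moment-generating-function computation, and the linearize-then-Chernoff template generalizes more readily (to other thresholds or $\ell_p$ norms), at the price of the delicate calibration of $\delta$ and $\theta$ that you identified.
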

\begin{proof} We use $\vol,\area$ to denote the Lebesgue volume and surface area of subsets of $\R^N$ and $\mu$ to denote the Haar measure on $\Su^{N-1}$. Observe that for any measurable subset $T\subseteq \Su^{N-1}$, we have 
\begin{equation}
    \label{eq:app_1}
    \mu(T) \triangleq  \frac{\area(T)}{\area(\Su^{N-1})}.
\end{equation}

Consider the intersection of the unit sphere $\Su^{N-1}$ (depicted in black in~\Cref{fig:circle}) and the complement of the $\ell_1$-ball, $S:=\{ v\in \R^N:\|v\|_1\ge 0.99 \sqrt{N}\}$; our goal is to bound $\mu(\Su^{N-1}\cap S)$, the measure of the intersection under the Haar measure on $\Su^{N-1}$. We can view the set $S$ as a union of $2^N$ many half spaces, namely, for each $\alpha\in \{\pm 1\}^N$, we have the half-space $S_\alpha:=\{v\in \R^N:\sum_{i=1}^N \alpha_i v_i \ge 0.99\sqrt{N}\}$. For each $\alpha\in \{\pm 1\}^N$, the set $\Su^{N-1}\cap S_\alpha$ is a spherical cap, and we will show that the measure of this cap is at most $7^{-(N-1)}$. A union bound over all $2^N$ choices of $\alpha\in\{\pm 1\}^N$ gives that with probability at least $1-2^N\cdot 7^{-(N-1)}\ge 1 - 3^{-N}$ (for large $N$), we have $\|u\|_1\le 0.99\sqrt{N}$, and this completes the proof.

We now upper bound the measure of the spherical cap $\Su^{N-1}\cap S_\alpha$ (depicted in red in~\Cref{fig:circle}). Observe that the diameter (distance between two furthest points) of this spherical cap is $d:=2 \sqrt{1-(0.99)^2}$ and that $d\le 2/7$. Consider the sphere $B'$ in $\R^N$ of diameter $d$ that intersects the spherical cap precisely at its edge (depicted in blue in~\Cref{fig:circle}); note that there is a unique such sphere $B'$ and that the intersection of $B'$ with the spherical cap $\Su^{N-1}\cap S_\alpha$ forms a great circle of $B'$. Now, consider the intersection of the interiors of $B'$ and $\Su^{N-1}$ (depicted in yellow in~\Cref{fig:circle}).  This is a convex body that is contained in $B'$ and hence its surface area is at most that of $B'$ (this inequality is a known direct consequence of Cauchy's surface area formula, see e.g. \cite{overflow-area} and Section~5.5 of \cite{KlainRota}).
Since the boundary of $\Su^{N-1}\cap S_\alpha$ is a subset of the boundary of this convex body, the surface area of $\Su^{N-1} \cap S_\alpha$ is at most that of $B'$.  Therefore, by~\Cref{eq:app_1}, we have the following inequality
\[\mu(\Su^{N-1} \cap S_\alpha)\triangleq \frac{\area(\Su^{N-1} \cap S_\alpha)}{\area(\Su^{N-1})}\le \frac{\area(B')}{\area(\Su^{N-1})}=(d/2)^{N-1}\le 7^{-(N-1)}, \]
where the second-to-last inequality uses the fact that a ball of radius $d/2$ in $\R^N$ has surface area $(d/2)^{N-1}$ times that of the unit ball in $\R^N$. This completes the proof. 
\end{proof}

\begin{figure}
\centering

\tikzset{every picture/.style={line width=0.75pt}} 

\begin{tikzpicture}[x=0.75pt,y=0.75pt,yscale=-1,xscale=1]

\draw [draw opacity=0][fill={rgb, 255:red, 248; green, 231; blue, 28 }  ,fill opacity=0.57 ]   (297.12,121.79) .. controls (275.9,135.37) and (252.33,132.67) .. (238.94,119.62) .. controls (225.54,106.57) and (224.8,64.47) .. (241.11,61.44) .. controls (257.43,58.41) and (294.33,88.67) .. (298.12,121.79) ;
\draw   (163.53,189.51) .. controls (134.05,157.74) and (135.91,108.09) .. (167.68,78.61) .. controls (199.44,49.13) and (249.09,50.99) .. (278.57,82.75) .. controls (308.05,114.52) and (306.2,164.17) .. (274.43,193.65) .. controls (242.66,223.13) and (193.01,221.27) .. (163.53,189.51) -- cycle ;
\draw  [color={rgb, 255:red, 30; green, 87; blue, 244 }  ,draw opacity=1 ][line width=1.5]  (238.94,118.62) .. controls (223.47,101.95) and (224.45,75.9) .. (241.11,60.44) .. controls (257.78,44.97) and (283.82,45.95) .. (299.29,62.61) .. controls (314.75,79.28) and (313.78,105.33) .. (297.12,120.79) .. controls (280.45,136.26) and (254.4,135.28) .. (238.94,118.62) -- cycle ;
\draw    (221.05,135.13) -- (269.43,89.88) ;
\draw  [dash pattern={on 4.5pt off 4.5pt}]  (191.2,7.39) -- (347.02,173.84) ;
\draw    (221.05,135.13) -- (241.11,60.44) ;
\draw [color={rgb, 255:red, 255; green, 0; blue, 0 }  ,draw opacity=1 ][line width=1.5]    (241.11,60.44) .. controls (276,70) and (290,92) .. (298.12,120.79) ;

\draw (212.36,105.93) node [anchor=north west][inner sep=0.75pt]  [font=\footnotesize,rotate=-0.81]  {$1$};
\draw (339,133.4) node [anchor=north west][inner sep=0.75pt]  [font=\small]  {$\sum_{i=1}^N \alpha_iv_i\ge 0.99\sqrt{N}$};
\draw (253.36,106.93) node [anchor=north west][inner sep=0.75pt]  [font=\footnotesize,rotate=-0.81]  {$0.99$};

\end{tikzpicture}

\caption{Proof of~\Cref{fact:spherical_cap}. Depiction of $\Su^{N-1}$ in black, $\Su^{N-1}\cap S_\alpha$ in red, $B'$ in blue.}
     \label{fig:circle}
\end{figure}
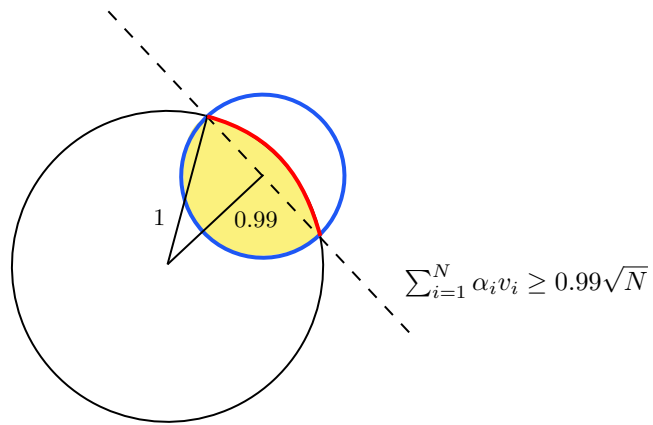

\end{document}